\def\spc{\color[rgb]{1,0.2,0.2}}
\def\nc{\normalcolor}
\newtheorem {corollary}{Corollary}
\newtheorem {proposition}{Proposition}
\newtheorem {theorem}{Theorem}
\providecommand{\customgenericname}{}
\newcommand{\newcustomtheorem}[2]{%
	\newenvironment{#1}[1]
	{%
		\renewcommand\customgenericname{#2}%
		\renewcommand\theinnercustomgeneric{##1}%
		\innercustomgeneric
	}
	{\endinnercustomgeneric}
}
\newtheorem{remark}{Remark}
\newtheorem{example}{Example}
\newcommand{\R}{{\mathbb{R}}}
\newcommand{\Nc}{\mathcal{N}}
\renewcommand{\P}{\mathsf{P}}
\newcommand{\C}{\mathsf{C}_{\rm lvm}}
\newcommand{\Put}{\mathsf{P}_{\rm lvm}}
\newcommand{\CBS}{\mathsf{C}_{\rm BS}}
\newcommand{\PBS}{\mathsf{P}_{\rm BS}}
\newcommand{\Vatm}{\mathsf{V}_{\rm atm}}
\newcommand{\BS}{\mathsf{BS}_{\rm atm}}
\newcommand{\sigatm}{\sigma_{\rm atm}}
\newcommand{\eps}{\varepsilon}
\newcommand{\E}{\mathsf{E}}
\newcommand{\sign}{{\rm sgn}}
\newcommand{\wt}{\widetilde}
\newcommand{\sgm}{\sigma_{\scriptscriptstyle -}}
\newcommand{\sgp}{\sigma_{\scriptscriptstyle +}}
\newcommand{\sigmaBS}{\sigma_{ {\rm BS}}}
\newcommand{\mathsym}[1]{{}}
\newcommand{\unicode}[1]{{}}
\begin{document}

\title{
Extreme ATM skew 
 in a local volatility model with discontinuity: joint density approach
}

\author{
Alexander Gairat\footnote{Gearquant.
 Email address: agairat@gearquant.com
}\, and 
Vadim Shcherbakov\footnote{
 Royal Holloway,  University of London.
 Email address: vadim.shcherbakov@rhul.ac.uk
}
}

\maketitle


\begin{abstract}
{\small 
This paper concerns  a local volatility model in which 
 volatility takes two possible values, and the specific value depends on whether the underlying  price is above or below a given  threshold value. The 
 model is known, and a number of results have been obtained for it. 
In particular, option pricing formulas and 
a power law behaviour of the  implied volatility skew have been established
in the case when  the threshold is taken at the money. 
In this paper we  derive an alternative representation of
 option pricing formulas.
 In addition, we obtain an approximation of 
option prices by the corresponding Black-Scholes prices. Using this approximation 
streamlines obtaining the aforementioned behaviour of the skew.
Our approach is  based on the 
natural relationship of the  model with Skew Brownian motion and 
consists  of the systematic use  of the joint distribution of this stochastic process and some of its functionals.
}
\end{abstract}

\noindent {{\bf Keywords:} 
 Local volatility model,  Skew Brownian motion, implied volatility,\\
  at the money skew}

\section{Introduction}

This  paper concerns  a  local volatility model (LVM),  
in which volatility  takes only  two possible  values. 
If the underlying price is larger or equal to some threshold value $R$, then volatility is equal to $\sgp$, and
if the underlying price is less than $R$, then volatility is equal to $\sgm$, where $\sgp$ and $\sgm$ are given
positive constants.
In what follows we refer  to this model as the two-valued LVM.
If $\sgp=\sgm=\sigma$, then the model is just the classic log-normal model with constant volatility $\sigma$,
under which the celebrated Black-Scholes (BS)  formula for the price of a European option has been obtained.

The two-valued LVM is well known, and a number of results for the model are available
 (e.g. see Gairat and Shcherbakov~\cite{GS}, Lipton~\cite{Lipton1}, Lipton and Sepp~\cite{Lipton2}, Pigato~\cite{Pigato} and references therein).  
 Pricing formulas for European  options
 have been obtained in Gairat and Shcherbakov~\cite{GS}
 for arbitrary spot  $S_0$
 and strike $K$.
In Pigato~\cite{Pigato}  option pricing formulas have been 
obtained in the case  when $S_0=R$ or $K=R$.  The other results in
that paper concern the analysis of the implied volatility surface  in the case when  $S_0=R$.
 In particular, it was shown  that if   strike $K$ and
 maturity  $T$ are related 
 by the equation $K=e^{\gamma\sqrt{T}}$ (``the central limit regime''), 
 then  implied volatility $\sigmaBS(T,K)=\sigmaBS(T,e^{\gamma\sqrt{T}})$ 
 converges to a smooth function $\sigmaBS(\gamma)$ of $\gamma$, as $T\to 0$, 
 and Taylor's expansion of the second order for this limit function was explicitly computed.
Moreover, an exact formula for the  at the money (ATM) implied volatility 
skew was obtained.  
These results were  then used to show  that the ATM skew 
 behaves  like $T^{-1/2}$, as  $T\to 0$. 

The result concerning the short term behaviour of the ATM skew 
 is of particular interest for the following reasons.
On one hand, this reproduces the well-known power law behaviour of 
the skew for short term maturities observed in some data   (see Pigato~\cite{Pigato} and references therein for more details). On the other hand, 
 in Guyon and El Amrani~\cite{Guyon} the authors  claim that
``our study suggests that the term-structure of equity ATM
skew has a power-law shape for maturities above $1$ month but has a different behavior, and in particular may not blow up, for shorter maturities''.
This difference in results and opinions should be taken into account when using the two-valued LVM.  In addition, note that 
in Foreign Exchange option markets the standard practice is  
to quote  implied volatility in terms of delta moneyness  
\(\frac{\log(K/F)}{\sigma \sqrt{T}}\)  
 (e.g. see Clark~\cite[Chapter 3]{Clark} and references therein).
  As a result, the  skew  asymptotically behaves as 
 \(T^{-1/2}\) when the implied volatility  is  quoted in terms of the 
 moneyness  \(\log(K/F)\).

The research approach in Pigato~\cite{Pigato} was based on 
using the  Laplace transform and related research techniques.
In the present paper we  demonstrate another approach 
to the study of the two-valued LVM.
 This alternative method   is  based on  the natural relationship 
of the two-valued LVM with Skew Brownian motion (SBM). The latter is 
 a continuous time Markov process 
obtained from standard Brownian motion (BM) by independently choosing with certain fixed probabilities the signs 
of the excursions away from the origin. 
If these probabilities are equal to $1/2$, then the process is standard BM.
It turns out that if the underlying price follows the two-valued LVM, then the natural logarithm of the price divided by volatility is a special case of SBM with a two-valued 
drift (to be explained). 
Our approach  consists of using  the joint distribution of this process and its 
functionals,  such  as the local time at the origin, the last visit to the origin and the  occupation time. The  distribution  was obtained in
 Gairat and Shcherbakov~\cite{GS}, where it was
applied   to option valuation under the two-valued LVM.
 Using this distribution allowed  to streamline some computations 
 in  a special case of  SBM  in Gairat and Shcherbakov~\cite{DryFriction}.
 In the present paper, we give another example of the 
 application of the joint  distribution.

 First of all,  we use the distribution
  to obtain  option pricing formulas   in the   case $S_0=R=1$
  (the assumption $R=1$ is  just a technical one, as the  general case of $R$ 
can  be readily reduced  to this one by dividing the underlying price by $R$).  
As we mentioned above,  the option pricing in the general case of the 
initial underlying price was considered in~\cite{GS}  
(see Section~\ref{Prices} below  for details).
However, the case $S_0=R=1$  
 was not explicitly mentioned in that paper.
Although  pricing formulas in this case
can be obtained   by passing to the limit $S_0\to 1$ 
in more general formulas in~\cite{GS},
we derive them  here directly by using the aforementioned joint distribution
(as we did in~\cite{GS} in the general case). This allows us to once
 again demonstrate the proposed method.
Besides, the corresponding computations are simplified in the  case when the discontinuity threshold is taken at the money.
We  also obtain a new representation of 
the option prices in this case by combining the joint  distribution with the well-known Dupire's forward equation. 
These new  option pricing formulas are in the form of the convolution 
of ATM prices with the density of the first passage time to zero of the standard Brownian motion, which is easy to interpret in probabilistic terms. 
Furthermore, we use our distributional results 
for  obtaining the approximation of option prices in the two-valued LVM by the corresponding BS prices (BS approximation). 
Briefly, the BS-approximation is as follows.  Consider a European option with maturity $T$ and strike $K\geq 1$.
 Let $\C(K, T)$ be the option price under the two-valued LVM and let  $\CBS(\sgp,K, T)$ be the BS price of 
 the same option, when volatility is equal to $\sgp$.
 Then $|\C(K,T)-2p\CBS(\sgp, K, T)|\leq cT$ for all sufficiently small $T$, where $p=\frac{\sgm}{\sgm+\sgp}$.
 A similar approximation holds for prices of European put options.
 The BS approximation was already briefly  noted in~\cite{GS},
  although it was not sufficiently  appreciated there. 
In the present paper we discuss the approximation in more detail
and apply it to obtain some key results    concerning the short term 
behaviour of  implied  volatility. 
In addition, we obtain a new form of the approximation 
and apply it to estimate  implied volatility. 
  
The rest of the paper is organised as follows. In Section~\ref{LVM} we formally define the two-valued LVM and state the key result concerning  the aforementioned joint distribution.
Option pricing formulas and  the BS approximation 
are stated in Section~\ref{Prices}.
Section~\ref{Volatility} concerns the analysis of the implied volatility surface. 
Proofs of most of the  results  are given in Section~\ref{A}.
In Section~\ref{BS-study} a modification of the  BS approximation is discussed.

\section{The model }
\label{LVM}

Start with some notations.
Let $(\Omega, {\cal F}, \P)$
  be a probability space on which all random variables under
consideration are defined. 
The expectation with respect to the probability measure $\P$ will be 
 denoted by $\E$. 
Throughout  $W_t=(W_t,\, t\geq 0)$ denotes  standard Brownian motion (BM), 
and ${\bf 1}_{A}$ denotes  the indicator function of a set or an event $A$.

Without loss of generality we assume throughout that the threshold value $R$ of the underlying price, where the volatility changes its value, is $R=1$.  
 
In the two-valued LVM that was briefly described in the introduction 
 the underlying price 
  $S_t=(S_t,\, t\geq 0)$  follows the equation
\begin{equation*}
dS_t=\sigma(\log S_t) S_t dW_t,
\end{equation*}
where  the  function $\sigma$ is given by 
\begin{equation*}
\sigma(x)=\sgp{\bf 1}_{\{x\geq 0\}}+\sgm{\bf 1}_{\{x<0\}}
\end{equation*} 
for some constants $\sgp>0$ and $\sgm>0$.

Further, consider the process  $X_t=(X_t,\, t\geq 0)$ defined by 
\begin{equation}
\label{X_t}
X_t=\begin{cases}
\frac{\log S_t}{\sgp}&\text{ for } S_t\geq 1,\\
\frac{\log S_t}{\sgm}&\text{ for }  S_t<1.
\end{cases}
\end{equation}
By~\cite[Lemma 1]{GS}), the process $X_t$ follows the equation
\begin{equation}
\label{X_SDE}
dX_t=m(X_t)dt+(p-q)dL_t+dW_t,
\end{equation}
where 
\begin{align}
\label{m(x)}
m(x)&=-\frac{\sgp}{2}{\bf 1}_{\{x\geq 0\}}-\frac{\sgm}{2}{\bf 1}_{\{x<0\}},\\
\label{pq}
p&=\frac{\sgm}{\sgm+\sgp},\\
\label{qq}
q&=1-p=\frac{\sgp}{\sgm+\sgp},
\end{align}
and 
\begin{equation}
\label{L}
L_T=L^{(X)}_T=\lim\limits_{\eps\to 0}\frac{1}{2\eps}\int\limits_0^T {\bf 1}
_{\{-\eps\leq X_u\leq \eps\}}du
\end{equation}
 is  the local time of the process  at the origin.
 
 The process $X_t$ is a special case of Skew Brownian motion (SBM) with a two-valued drift. Recall that SBM
 (without drift)  is 
 obtained from standard  BM  by independently choosing with certain probabilities the signs 
of the excursions away from the origin. 
 An excursion is chosen to be positive with a  fixed  probability $p$
and negative with probability $q=1-p$ (if these probabilities are equal to $\frac{1}{2}$, then the process is standard BM).
The process $X_t$ is SBM with probabilities $p$ and $q$  given by~\eqref{pq} and~\eqref{qq} respectively,
and the two-valued drift~\eqref{m(x)} (see Appendix~\ref{B}).

As we already mentioned in the introduction, a key ingredient 
in our analysis of the two-valued LVM  
is the  use of the joint distribution of the process $X_t$ and some of its functionals.
These functionals include the local time of the process and 
the following quantities.
Namely, given $T>0$ let 
\begin{align}
\nonumber
\tau_0&= \min\left\{t\in [0,T]: X_t=0\right\},\\
\label{tau}
\tau&=\max\{t\in (0,T]: X_t=0\},
\end{align}
be  the  first and the last visits to the origin respectively (on the interval $[0,T]$),
and let 
\begin{align}
\label{V}
V&=\int\limits_{\tau_0}^{\tau}{\bf 1}_{\{X_t\geq 0\}}dt
\end{align}
be the occupation time of the non-negative half line on the interval $[\tau_0, \tau]$.

The distribution of interest is given  in 
Theorem~\ref{C1} below. 
Note that the theorem 
is a special case of a more general result for SBM 
 obtained in~\cite{GS} (see Theorem~\ref{T1} in Appendix~\ref{B}).

\begin{theorem}
\label{C1}
Let $X_t$ be the process defined in~\eqref{X_t}.
Let $L_T$, $\tau$ and  $V$
 be quantities of this process  defined by equations~\eqref{L},~\eqref{tau}
and~\eqref{V} respectively, and let $X_0=0$.
Then, the joint density of random variables
$\tau, V, X_T$ and $ L_T$ is  given by 
\begin{equation}
\label{f-1}
f_{T}(t, v, x, \ell)
=2\alpha(x) h(v, \ell p)h(t-v, \ell q)h(T-t, x)
e^{-\frac{\sgp^2v+\sgm^2(t-v)+\sigma^2(x)(T-t)}{8}-\frac{\sigma(x)}{2}x},
\end{equation}
for $ 0\leq v\leq  t\leq T$ and $\ell\geq 0$, where 
\begin{equation}
\label{a(x)}
\alpha(x)=p{\bf 1}_{\{x\geq 0\}}+q{\bf 1}_{\{x<0\}},
\end{equation}
probabilities $p$ and $q$ are defined in~\eqref{pq} and~\eqref{qq} respectively, 
and 
\begin{equation*}
h(s, y)=\frac{|y|}{\sqrt{2\pi s^3}}e^{-\frac{y^2}{2s}},\,\, y\in \R,  s\in \R_{+},
\end{equation*}
is the  density  of the first passage time  to zero of the standard BM starting at $y$. 
\end{theorem}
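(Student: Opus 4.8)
The plan is to derive Theorem~\ref{C1} as the specialization to $X_0=0$ of the general joint density for SBM with two-valued drift recorded in Theorem~\ref{T1} of Appendix~\ref{B} (the result of~\cite{GS}), which describes the joint law of $(\tau_0,\tau,V,X_T,L_T)$ for an arbitrary starting point. When $X_0=0$ the process sits at the origin from the outset, so $\tau_0=0$ almost surely: $\tau_0$ ceases to be a genuine random variable, the interval $[\tau_0,\tau]$ collapses to $[0,\tau]$, and the whole task reduces to checking that evaluating the general density at $\tau_0=0$ reproduces~\eqref{f-1}.

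First I would match the three first-passage factors. Writing $\tau=t$, the excursion structure on $[0,\tau]$ splits into a positive cluster, occupying total time $v$ with local-time parameter $\ell p$, and a negative cluster occupying $t-v$ with local-time parameter $\ell q$; these contribute $h(v,\ell p)$ and $h(t-v,\ell q)$. The remaining factor $h(T-t,x)$ is the density of the terminal meander, namely the last excursion that leaves the origin at time $\tau$ and reaches $X_T=x$ without returning to $0$. The prefactor $\alpha(x)$ of~\eqref{a(x)} records the direction ($p$ if $x\geq0$, $q$ if $x<0$) of this final excursion according to $\sign(x)$, and the constant $2$ is the normalization carried by the local-time density.

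Next I would account for the exponential tilt, which in~\cite{GS} arises from a Girsanov change of measure from driftless SBM. Since $m(x)=-\tfrac{\sigma(x)}{2}$, the Radon--Nikodym density is $\exp\big(\int_0^T m(X_s)\,dW_s-\tfrac12\int_0^T m(X_s)^2\,ds\big)$. The quadratic term equals $\tfrac18\int_0^T\sigma(X_s)^2\,ds$, and splitting the time axis according to $\sign(X_s)$ yields exactly $\tfrac18(\sgp^2v+\sgm^2(t-v)+\sigma^2(x)(T-t))$, because on $[\tau,T]$ the path stays on the side fixed by $\sign(x)$. For the stochastic integral I would substitute $dW_s=dX_s-m(X_s)\,ds-(p-q)\,dL_s$ from~\eqref{X_SDE} and apply the It\^o--Tanaka formula to $\int_0^T m(X_s)\,dX_s$ with the antiderivative $M(x)=-\tfrac{\sigma(x)}{2}x$; this produces the terminal term $M(X_T)=-\tfrac{\sigma(x)}{2}x$ of~\eqref{f-1} together with local-time corrections at the origin.

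The hard part will be this last bookkeeping. Because the drift is discontinuous at $0$ and the dynamics contain the singular term $(p-q)\,dL_t$, the It\^o--Tanaka computation must be carried out with care, distinguishing the symmetric from the one-sided local time at the origin. The decisive check is that all contributions proportional to $L_T$ cancel, consistent with the absence of any $\ell$-dependence in the exponent of~\eqref{f-1}, and that the weights $p,q$ are correctly allocated to the two excursion clusters. Once the passage $\tau_0=0$ in the general density and this local-time cancellation are verified, formula~\eqref{f-1} follows.
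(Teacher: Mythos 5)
Your overall strategy coincides with the paper's: Theorem~\ref{C1} is obtained there precisely by specializing Theorem~\ref{T1} of Appendix~\ref{B} (see the remark following that theorem). But your execution contains a misreading and, more importantly, a gap. The misreading: Theorem~\ref{T1} is not a five-variable density for an arbitrary starting point; it already assumes $Z_0=0$ and gives the joint density of the four variables $\tau^{(Z)}, V^{(Z)}, Z_T, L^{(Z)}_T$. Hence the step you describe --- collapsing $[\tau_0,\tau]$ and ``evaluating the general density at $\tau_0=0$'' --- has no content: the hypotheses of the two theorems already agree, and the only task is to substitute the particular drift and skewness parameters into the general formula.

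The gap: you never perform that substitution, and you relocate it into a Girsanov/It\^o--Tanaka computation that you yourself flag as ``the hard part'' and leave unexecuted. That computation is essentially the proof of Theorem~\ref{T1} in~\cite{GS}; redoing it defeats the purpose of citing the theorem, and the symmetric-versus-one-sided local time subtleties you mention are genuinely delicate, so leaving them unresolved leaves the proof unfinished. What is actually needed is one line of algebra with the explicit factor
\begin{equation*}
\beta(t,v,x,\ell)=e^{-\frac{1}{2}\left(m_1^2v+m_2^2(t-v)+m^2(x)(T-t)\right)-\ell(pm_1-qm_2)+m(x)x}
\end{equation*}
from Theorem~\ref{T1}: with $m_1=-\frac{\sgp}{2}$, $m_2=-\frac{\sgm}{2}$ and $p,q$ as in~\eqref{pq} and~\eqref{qq},
\begin{equation*}
pm_1-qm_2=-\frac{\sgm\sgp}{2(\sgm+\sgp)}+\frac{\sgp\sgm}{2(\sgm+\sgp)}=0,
\end{equation*}
so the $\ell$-dependence disappears, while $\frac{1}{2}m_1^2=\frac{\sgp^2}{8}$, $\frac{1}{2}m_2^2=\frac{\sgm^2}{8}$, $\frac{1}{2}m^2(x)=\frac{\sigma^2(x)}{8}$ and $m(x)x=-\frac{\sigma(x)}{2}x$, which turns~\eqref{f} into~\eqref{f-1}. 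The cancellation of the local-time terms that you correctly identify as decisive is thus an algebraic identity in the already-proved formula, not a stochastic-calculus verification left to be done.
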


\begin{remark}
{\rm
Note that it is convenient to rewrite the density in terms of the  variables 
$u=t-v$ and $s=T-t$ for $t\in[0,T]$.
It is easy to see that if $X_0=0$ and $x\geq 0$, then 
the variable $u$ is the occupation time of the negative 
half-line, and the variable $v+s$ is the total occupation time of the positive half-line, and 
 if $X_0=0$ and $x<0$, then the occupation time of the negative 
half-line is $u+s$, and  the total occupation time of the positive half-line is $v$.
In these terms we have that 
\begin{equation}
\label{f-v-u-s-1}
\begin{split}
f_{T}(t, v, x, \ell)&=\wt f_T(v, u, s, x, \ell)\\
&=2\alpha(x) h(v, \ell p)h(u, \ell q)h(s, x)
e^{-\frac{\sgp^2v+\sgm^2u+\sigma^2(x)s}{2}-\frac{\sigma(x)}{2}x},
\end{split}
\end{equation}
for $(v,u,s): v, u, s\geq 0,\, v+u+s=T$ and $\ell\geq 0$.
}
\end{remark}

\begin{theorem}
\label{L0}
Let $X_t$ be the process defined in~\eqref{X_t}.
Let $p(0,x,T)$ be the probability density function  of  $X_T$  given that  $X_0=0$.
Then 
\begin{align*}
p(0,x,T)&=
2\alpha(x)e^{-\frac{\sigma(x)}{2}x}\int_0^T\phi(T-s)h(s,x)e^{-\frac{\sigma^2(x)}{8}s}ds\\
&=\begin{cases}
2pe^{-\frac{\sgp}{2}x}\int_0^T\phi(T-s)h(s,x)e^{-\frac{\sgp^2}{8}s}ds& \text{ for } x\geq 0,\\
2qe^{-\frac{\sgm}{2}x}\int_0^T\phi(T-s)h(s,x)e^{-\frac{\sgm^2}{8}s}ds& \text{ for } x<0,\\
\end{cases}
\end{align*}
where 
\begin{equation}
\label{phi}
\begin{split}
\phi(t)&=
\frac{1}{\sqrt{2\pi t}(\sgp-\sgm)}
\Big(\sgp e^{-\frac{\sgm^2}{8}t}-\sgm e^{-\frac{\sgp^2}{8}t}\Big)\\
&+
\frac{1}{2}\frac{\sgp\sgm}{(\sgp-\sgm)}
\bigg(\Nc\Big(\frac{\sqrt{t} \sgm}{2}\Big)-\Nc\Big(\frac{\sqrt{t}\sgp}{2}\Big)\bigg)
\end{split}
\end{equation}
and 
\begin{equation*}
\Nc(z)=\int_{-\infty}^z\frac{1}{\sqrt{2\pi}}e^{-\frac{x^2}{2}}dx 
\end{equation*}
for  $z\in \R$.
\end{theorem}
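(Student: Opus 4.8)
The plan is to obtain $p(0,x,T)$ as the marginal density of $X_T$ by integrating out the remaining variables in the joint density of Theorem~\ref{C1}. Since $X_0=0$, the last visit $\tau$, the occupation time $V$ and the local time $L_T$ are the quantities to be eliminated, so
\begin{equation*}
p(0,x,T)=\int_0^\infty\!\!\int_0^T\!\!\int_0^t f_T(t,v,x,\ell)\,dv\,dt\,d\ell .
\end{equation*}
The crucial observation is that in $f_T$ the factor $2\alpha(x)e^{-\frac{\sigma(x)}{2}x}h(T-t,x)e^{-\frac{\sigma^2(x)(T-t)}{8}}$ depends only on $x$ and on the length $T-t$ of the terminal excursion, while the remaining factor $h(v,\ell p)h(t-v,\ell q)e^{-\frac{\sgp^2v+\sgm^2(t-v)}{8}}$ depends only on $(t,v,\ell)$. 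Substituting $s=T-t$ therefore factorises the triple integral, and matching with the claimed formula reduces the theorem to the identity
\begin{equation*}
\phi(t)=\int_0^\infty\!\!\int_0^t h(v,\ell p)\,h(t-v,\ell q)\,e^{-\frac{\sgp^2v+\sgm^2(t-v)}{8}}\,dv\,d\ell .
\end{equation*}

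I would evaluate this by integrating in $\ell$ first. Writing out $h$, the integrand is a constant multiple of $\ell^2 e^{-\frac{\ell^2}{2}a(v)}$ with $a(v)=\frac{p^2}{v}+\frac{q^2}{t-v}$, and the elementary Gaussian moment $\int_0^\infty \ell^2 e^{-\frac{a\ell^2}{2}}d\ell=\tfrac12\sqrt{2\pi/a^3}$ collapses the $\ell$-integral to $\frac{pq}{2\sqrt{2\pi}}\big(p^2(t-v)+q^2v\big)^{-3/2}$. Using $p=\frac{\sgm}{\sgm+\sgp}$ and $q=\frac{\sgp}{\sgm+\sgp}$ this becomes $\frac{\sgm\sgp(\sgm+\sgp)}{2\sqrt{2\pi}}\big(\sgm^2(t-v)+\sgp^2v\big)^{-3/2}$, so that
\begin{equation*}
\phi(t)=\frac{\sgm\sgp(\sgm+\sgp)}{2\sqrt{2\pi}}\int_0^t\frac{e^{-\frac{\sgp^2v+\sgm^2(t-v)}{8}}}{\big(\sgm^2(t-v)+\sgp^2v\big)^{3/2}}\,dv .
\end{equation*}

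The remaining $v$-integral I would linearise by the substitution $w=\sgm^2(t-v)+\sgp^2v$, which also turns the exponent into $-w/8$ and produces $\int_{\sgm^2t}^{\sgp^2t} w^{-3/2}e^{-w/8}\,dw$ up to the factor $(\sgp^2-\sgm^2)^{-1}$; the prefactor then simplifies to $\frac{\sgm\sgp}{2\sqrt{2\pi}(\sgp-\sgm)}$ since $\frac{\sgm+\sgp}{\sgp^2-\sgm^2}=\frac{1}{\sgp-\sgm}$. Integrating by parts (differentiating $e^{-w/8}$ and integrating $w^{-3/2}$) produces a boundary term $-2w^{-1/2}e^{-w/8}$ together with $\tfrac14\int w^{-1/2}e^{-w/8}dw$, and the latter equals $\sqrt{2\pi}\,\Nc(\sqrt w/2)$ after the change $w=y^2$ and the identification with the normal CDF $\Nc$. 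Evaluating at the endpoints $w=\sgm^2t$ and $w=\sgp^2t$ and recombining yields exactly the two lines of~\eqref{phi}.

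I expect the only real work to be this last $v$-integral: the linear substitution and the integration by parts are routine, but one must correctly recognise the $w^{-1/2}$ antiderivative as a normal-distribution term in order to produce the $\Nc$ contributions, and carefully track the algebra that turns $\frac{\sgm\sgp(\sgm+\sgp)}{\sgp^2-\sgm^2}$ into $\frac{\sgm\sgp}{\sgp-\sgm}$. The factor $(\sgp-\sgm)$ in the denominator indicates that the formula is stated for $\sgp\neq\sgm$; the degenerate case $\sgp=\sgm$ is standard BM and follows by passing to the limit.
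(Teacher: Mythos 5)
Your proposal is correct and follows the same skeleton as the paper's proof: both integrate the joint density of Theorem~\ref{C1} over $(\ell,v,t)$, pull out the $x$-dependent factor $2\alpha(x)e^{-\frac{\sigma(x)}{2}x}h(s,x)e^{-\frac{\sigma^2(x)}{8}s}$ with $s=T-t$, and thereby reduce the theorem to the single identity
\begin{equation*}
\int_0^\infty\int_0^t h(v,\ell p)\,h(t-v,\ell q)\,e^{-\frac{\sgp^2 v+\sgm^2(t-v)}{8}}\,dv\,d\ell=\phi(t).
\end{equation*}
The only real difference is how this identity is handled: the paper disposes of it in one line by citing two formulas from the proof of Theorem~3 in~\cite{GS} (the closed form of the $\ell$-integral and the identification of the remaining $v$-integral with $\phi$), whereas you prove both from scratch --- the $\ell$-integral via the Gaussian moment $\int_0^\infty\ell^2e^{-a\ell^2/2}\,d\ell=\tfrac12\sqrt{2\pi/a^3}$, and the $v$-integral via the linear substitution $w=\sgm^2(t-v)+\sgp^2v$ followed by integration by parts, recognizing the $w^{-1/2}$ antiderivative as a normal cdf. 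This buys a self-contained proof and, incidentally, a check of the constants: the second identity as printed in the paper has prefactor $\frac{pq}{\sqrt{2\pi}}$, which is inconsistent by a factor of $2$ with the first identity and with the factor $2\alpha(x)$ in the theorem statement, while your prefactor $\frac{pq}{2\sqrt{2\pi}}$ is the consistent one. One small slip in your sketch: the integration by parts reads $\int w^{-3/2}e^{-w/8}\,dw=-2w^{-1/2}e^{-w/8}-\tfrac14\int w^{-1/2}e^{-w/8}\,dw$, so the remaining integral enters with a minus sign (you wrote a plus); it is precisely this sign that produces the ordering $\Nc\big(\sgm\sqrt{t}/2\big)-\Nc\big(\sgp\sqrt{t}/2\big)$ appearing in~\eqref{phi}, and with it your recombination of the boundary and cdf terms is exact.
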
	

Theorem~\ref{L0} is proved in Section~\ref{A}.

\begin{remark}
{\rm 
The function $\phi$ defined in~\eqref{phi} has a natural  probabilistic 
sense.
Recall that the function $h(s,x)$ is the density of the 
first passage time to $0$ of the standard BM starting at $x$. The distribution of the first passage  
time  converges, as $x\to 0$,  to the distribution concentrated at $0$. 
It follows from this fact and Theorem~\ref{L0} that 
$$\phi(T)=\frac{1}{2p}\lim\limits_{x\downarrow 0}p(0,x,T)=
\frac{1}{2q}\lim\limits_{x\uparrow 0}p(0,x,T).$$
}
\end{remark}

\section{Option valuation}
\label{Prices}

Let us briefly recall results of~\cite{GS} concerning the option valuation under the two-valued LVM.
Pricing  formulas  were obtained in that paper for knock-in call options  in the  cases
$R=1, S_0\geq 1, K>1$ and $R=1, S_0<1, K>1$.
By combining these results   with the Black-Scholes prices  for  knock-out call options
one can obtain prices of European call options for  other values of $S_0$ and $K$.
The pricing formula in the case $R=1, S_0\geq 1, K>1$  in~\cite{GS} is given in terms of a single integral, 
where an integrand  is analytically  expressed  in terms of the cumulative distribution function (cdf) 
of the standard normal distribution $N(0,1$).
In the special case $R=S_0=1, K>1$ one gets the price of the European option call option.
This case is considered  in Theorem~\ref{Call-Put} below.
The pricing formula in the case $R=1, S_0<1, K>1$ in~\cite{GS}  is also given in terms of a single integral
with an integrand analytically  expressed  in terms of the standard normal distribution   and 
a bivariate normal distribution.

\subsection{Pricing formulas} 


Let 
\begin{equation}
\label{psi}
\psi(a, s, k)=\begin{cases}
\int\limits_{k}^{\infty}e^{ax}h(s,x)dx
=\frac{1}{\sqrt{2\pi s}}e^{ak-\frac{k^2}{2s}}+a\,e^{\frac{a^2}{2}s}
\Nc\Big(\frac{as-k}{\sqrt{s}}\Big)& \text{ for }  k\geq 0,\\
\int\limits_{-\infty}^ke^{ax}h(s,x)dx
=\frac{1}{\sqrt{2\pi s}}e^{ak-\frac{k^2}{2s}}-a\,e^{\frac{a^2}{2}s}
\Nc\Big(\frac{k-as}{\sqrt{s}}\Big)& \text{ for }  k<0.
\end{cases}
\end{equation}
Note that the equation for the function $\psi$ can be rewritten as follows
\begin{equation*}
\psi(a,s,k)=\frac{1}{\sqrt{2\pi s}}e^{ak-\frac{k^2}{2s}}
+a\,\sign (k) e^{\frac{a^2}{2}s}
\Nc\bigg(\sign(k)\frac{as-k}{\sqrt{s}}\bigg)
\text{ for }  k\in\R,
\end{equation*}
where
$$\sign (k)=\begin{cases}
1& \text{ for  } k\geq 0,\\
-1& \text{ for  } k<0.
\end{cases}
$$
Define 
\begin{equation}
\label{F}
F\left(T, a, k\right)=\int\limits_0^T\phi(T-s)\psi(a,s,k)e^{-\frac{a^2}{2}s}ds,
\end{equation}
where  $\phi$ is the function defined in~\eqref{phi}.

Consider European options with  strike $K$ and  time to expiry $T$. 
Let  $\C(S_0, K, T)$  and  $\Put(S_0, K, T)$ be the price of  the call option and 
the  put option respectively, when the initial value 
of the underlying price is  $S_0$.

\begin{theorem}
\label{Call-Put}
If  $S_0=1$ and $K>1$, then 
\begin{equation*}
\C(1, K, T)=2p\bigg(F\Big(T,\frac{\sgp}{2}, k\Big)-e^{\sgp k}
F\Big(T, -\frac{\sgp}{2},k\Big)\bigg),
\end{equation*}
where  $k=\log K/\sgp$.

If  $S_0=1$ and $K<1$, then 
\begin{equation*}
\Put(1, K, T)=2q\bigg(e^{\sgm k}F\Big(T,-\frac{\sgm}{2}, k\Big)-
F\Big(T, \frac{\sgm}{2},k\Big)\bigg),
\end{equation*}
where  $k=\log K/\sgm$.
\end{theorem}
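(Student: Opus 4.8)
The plan is to compute the call price directly as a risk-neutral expectation and to integrate against the marginal law of $X_T$ supplied by Theorem~\ref{L0}. Since the underlying follows a driftless SDE, $S_t$ is a martingale and there is no discounting, so $\C(1,K,T)=\E[(S_T-K)^+]$; the initial condition $S_0=1$ corresponds to $X_0=0$, which is exactly the starting point assumed in Theorem~\ref{L0}. The first step is to rewrite everything in terms of $X_T$. By the definition~\eqref{X_t}, $S_t\geq 1$ is equivalent to $X_t\geq 0$, and on this region $S_t=e^{\sgp X_t}$. For $K>1$ the event $\{S_T>K\}$ therefore coincides with $\{X_T>k\}$, where $k=\log K/\sgp>0$, and on this event $S_T=e^{\sgp X_T}$. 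Hence
\[
\C(1,K,T)=\int_k^\infty\big(e^{\sgp x}-K\big)\,p(0,x,T)\,dx .
\]

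Next I would substitute the $x\geq 0$ branch of $p(0,x,T)$ from Theorem~\ref{L0}, namely $p(0,x,T)=2p\,e^{-\frac{\sgp}{2}x}\int_0^T\phi(T-s)h(s,x)e^{-\frac{\sgp^2}{8}s}\,ds$, and interchange the order of the $x$- and $s$-integrations (Fubini, justified by nonnegativity of the integrand together with the Gaussian decay of $h(s,x)$ in $x$). This brings the price to the form
\[
\C(1,K,T)=2p\int_0^T\phi(T-s)\,e^{-\frac{\sgp^2}{8}s}\bigg(\int_k^\infty\big(e^{\sgp x}-K\big)e^{-\frac{\sgp}{2}x}h(s,x)\,dx\bigg)ds .
\]
The inner integral splits into $\int_k^\infty e^{\frac{\sgp}{2}x}h(s,x)\,dx-K\int_k^\infty e^{-\frac{\sgp}{2}x}h(s,x)\,dx$, and by the definition~\eqref{psi} of $\psi$ (the $k\geq 0$ case) these are precisely $\psi(\tfrac{\sgp}{2},s,k)$ and $\psi(-\tfrac{\sgp}{2},s,k)$.

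The final step is the algebraic matching. Since $a=\pm\tfrac{\sgp}{2}$ gives $\tfrac{a^2}{2}=\tfrac{\sgp^2}{8}$, the weight $e^{-\frac{\sgp^2}{8}s}$ is exactly the factor $e^{-\frac{a^2}{2}s}$ appearing in the definition~\eqref{F} of $F$, so the two $s$-integrals reproduce $F(T,\tfrac{\sgp}{2},k)$ and $F(T,-\tfrac{\sgp}{2},k)$. Using $e^{\sgp k}=K$ (which is just $k=\log K/\sgp$) to rewrite the coefficient of the second term then yields $\C(1,K,T)=2p\big(F(T,\tfrac{\sgp}{2},k)-e^{\sgp k}F(T,-\tfrac{\sgp}{2},k)\big)$, as claimed. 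The put formula follows by the mirror-image argument: for $K<1$ one has $\{S_T<K\}=\{X_T<k\}$ with $k=\log K/\sgm<0$ and $S_T=e^{\sgm X_T}$ there, so $\Put(1,K,T)=\int_{-\infty}^k(K-e^{\sgm x})p(0,x,T)\,dx$; substituting the $x<0$ branch of $p(0,x,T)$ and using the $k<0$ case of~\eqref{psi} produces $\psi(\pm\tfrac{\sgm}{2},s,k)$ and hence the stated expression.

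I expect no deep obstacle: the computation is essentially bookkeeping once Theorem~\ref{L0} is in hand. The points that require care are the correct identification of the payoff region in terms of $X_T$ and of $S_T$ on that region, the justification of the Fubini interchange, and---most importantly---tracking the exponential prefactors so that the quadratic weight $e^{-\frac{\sgp^2}{8}s}$ coming from Theorem~\ref{L0} merges exactly with the $e^{-\frac{a^2}{2}s}$ in the definition of $F$.
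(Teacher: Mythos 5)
Your proposal is correct and follows essentially the same route as the paper's own proof: integrate the payoff $(e^{\sgp x}-e^{\sgp k})$ against the $x\geq 0$ branch of the density from Theorem~\ref{L0}, swap the $x$- and $s$-integrations, recognize the inner integrals as $\psi(\pm\sgp/2,s,k)$, and absorb the weight $e^{-\sgp^2 s/8}=e^{-a^2s/2}$ into the definition of $F$, with the put handled by the mirror argument. The only differences are cosmetic (writing $K$ in place of $e^{\sgp k}$ and spelling out the Fubini justification, which the paper leaves implicit).
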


By Theorem~\ref{Call-Put} we immediately get the following equation for 
 the ATM price
\begin{equation}
\label{call-put-atm}
\begin{split}
\Vatm(T)&=\C(1,1,T)=2p\bigg(F\Big(T,\frac{\sgp}{2}, 0\Big)-
F\Big(T, -\frac{\sgp}{2},0\Big)\bigg)\\
&=\Put(1,1,T)
=2q\bigg(F\Big(T,-\frac{\sgm}{2}, 0\Big)-F\Big(T, \frac{\sgm}{2},0\Big)\bigg),
\end{split}
\end{equation}
which still involves the function $F$.
Corollary ~\ref{ATM-price} below shows that the above equation for 
 the ATM price can be simplified in such a way that 
  the ATM price is analytically  expressed in terms of the error function
$${\rm Erf}(x)=\frac{2}{\sqrt{\pi}}\int\limits_0^xe^{-t^2}dt.$$
\begin{corollary}[{\bf ATM price}]
\label{ATM-price}
If $S_0=K=1$, then 
\begin{equation*}
\begin{split}
\Vatm(T)
&=
\frac{\sgm^2\sgp^2}{4(\sgm^2-\sgp^2)}\bigg(\frac{\sqrt{8T}}{\sgp\sqrt{\pi}}e^{-\frac{\sgp^2}{8}T}-
\frac{\sqrt{8T}}{\sgm\sqrt{\pi}}e^{-\frac{\sgm^2}{8}T}\bigg.\\
&
\bigg.+\Big(\frac{4}{\sgp^2}+T\Big){\rm Erf}\Big(\frac{\sgp\sqrt{T}}{\sqrt{8}}\Big)
-\Big(\frac{4}{\sgm^2}+T\Big){\rm Erf}\Big(\frac{\sgm\sqrt{T}}{\sqrt{8}}\Big)\bigg).
\end{split}
\end{equation*}
\end{corollary}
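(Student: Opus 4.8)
The plan is to start from the representation of the ATM price already recorded in~\eqref{call-put-atm}, namely
\[
\Vatm(T)=2p\Big(F\big(T,\tfrac{\sgp}{2},0\big)-F\big(T,-\tfrac{\sgp}{2},0\big)\Big),
\]
and to exploit a drastic simplification that occurs in the function $\psi$ at the money, i.e. at $k=0$. Writing $a=\sgp/2$, the definition~\eqref{psi} with $k=0$ gives $\psi(\pm a,s,0)=\frac{1}{\sqrt{2\pi s}}\pm a\,e^{a^2 s/2}\Nc(\pm a\sqrt{s})$, so that, using $\Nc(z)+\Nc(-z)=1$,
\[
\psi(a,s,0)-\psi(-a,s,0)=a\,e^{\frac{a^2}{2}s}\big(\Nc(a\sqrt s)+\Nc(-a\sqrt s)\big)=a\,e^{\frac{a^2}{2}s}.
\]
Since the factor $e^{-a^2 s/2}$ appearing in the definition~\eqref{F} of $F$ is the same for $\pm a$, this exponential cancels and the difference collapses to $F(T,a,0)-F(T,-a,0)=a\int_0^T\phi(T-s)\,ds=a\int_0^T\phi(u)\,du$. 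Substituting $a=\sgp/2$ and $2p=\tfrac{2\sgm}{\sgm+\sgp}$ then gives the clean intermediate formula
\[
\Vatm(T)=\frac{\sgm\sgp}{\sgm+\sgp}\int_0^T\phi(u)\,du,
\]
which reduces the whole problem to integrating the explicit function $\phi$ from~\eqref{phi}.

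The second step is to evaluate $\int_0^T\phi(u)\,du$ in closed form. I would split $\phi$ into its two pieces according to~\eqref{phi}. For the first (reciprocal-square-root) piece the only integral needed is the elementary Gaussian identity $\int_0^T t^{-1/2}e^{-\beta t}\,dt=\sqrt{\pi/\beta}\,\,{\rm Erf}(\sqrt{\beta T})$, applied with $\beta=\sgm^2/8$ and $\beta=\sgp^2/8$. For the second piece, containing $\Nc(\sigma\sqrt{t}/2)$, I would integrate by parts, differentiating $\Nc(\sigma\sqrt t/2)$ (whose derivative is $\frac{\sigma}{4\sqrt{2\pi t}}e^{-\sigma^2 t/8}$) and integrating $dt$; this produces a boundary term $T\,\Nc(\sigma\sqrt T/2)$ together with the integral $\int_0^T \sqrt{t}\,e^{-\beta t}\,dt$, which is in turn reduced to the previous Gaussian integral by a further integration by parts. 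Finally I would use $\Nc(z)=\tfrac12\big(1+{\rm Erf}(z/\sqrt2)\big)$ to rewrite every $\Nc$ in terms of the error function; the additive constants $\tfrac12$ cancel in the difference $\Nc(\sgm\sqrt T/2)-\Nc(\sgp\sqrt T/2)$, leaving only $\tfrac12\big({\rm Erf}(\sgm\sqrt T/\sqrt8)-{\rm Erf}(\sgp\sqrt T/\sqrt8)\big)$.

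Collecting the contributions, the terms with $\frac{1}{\sqrt{2\pi t}}$ and the $\Nc$-derived $\int\sqrt t\,e^{-\beta t}$ contributions combine into the pure-${\rm Erf}$, the $T\cdot{\rm Erf}$, and the $\sqrt{T}\,e^{-\sigma^2 T/8}$ families that appear in the statement; multiplying by the prefactor $\frac{\sgm\sgp}{\sgm+\sgp}$ and simplifying with $(\sgm+\sgp)(\sgp-\sgm)=\sgp^2-\sgm^2$ reproduces the prefactor $\frac{\sgm^2\sgp^2}{4(\sgm^2-\sgp^2)}$ of the claimed expression. I expect the main obstacle to be purely computational: the careful bookkeeping in integrating the $\Nc$-piece of $\phi$ (the double integration by parts and the conversion $\Nc\to{\rm Erf}$), where it is easy to mismanage the factors of $\sqrt{8}$, $\sqrt{2\pi}$ and the signs coming from $\sgp-\sgm$. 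The genuinely conceptual content, by contrast, is confined to the first step: the identity $\Nc(z)+\Nc(-z)=1$ makes the apparently complicated combination of $F$'s collapse to a single integral of $\phi$, and once that is seen the rest is a lengthy but routine evaluation of standard incomplete-Gaussian integrals.
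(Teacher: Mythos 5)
Your proposal is correct and follows essentially the same route as the paper's own proof: both use the identity $\psi(a,s,0)-\psi(-a,s,0)=a\,e^{\frac{a^2}{2}s}$ (a consequence of $\Nc(z)+\Nc(-z)=1$) to collapse~\eqref{call-put-atm} to $\Vatm(T)=\frac{\sgm\sgp}{\sgm+\sgp}\int_0^T\phi(s)\,ds$, and then evaluate this integral in closed form. The only difference is that the paper states the value of $\int_0^T\phi(s)\,ds$ as ``a direct computation,'' whereas you spell out how that computation goes (Gaussian integrals plus integration by parts on the $\Nc$ terms), which is a harmless elaboration of the same argument.
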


Proofs of both Theorem~\ref{Call-Put} and Corollary~\ref{ATM-price} are given in Section~\ref{A}.

\begin{remark}
{\rm  It should be  noted  that the  option pricing formulas  in Theorem~\ref{Call-Put}
differ from those that are given  in Pigato~\cite{Pigato}. In both papers 
the option price is given by a single integral, but the corresponding integrands
differ. It might be of interest to investigate the relationship between the two variants. At the same time, the ATM price in Corollary~\ref{ATM-price}
 is exactly the same as  the one  in~\cite{Pigato}.
In addition, note that the ATM price can be rewritten as follows
$$\Vatm(T)=\frac{\sgm^2\sgp^2}{4(\sgm^2-\sgp^2)}
\big(I(\sgp, T)-I(\sgm, T)\big),$$
where 
\begin{equation*}
I(x, T)=\frac{\sqrt{8T}}{x\sqrt{\pi}}e^{-\frac{x^2}{8}T}
+\Big(\frac{4}{x^2}+T\Big){\rm Erf}\Big(\frac{x\sqrt{T}}{\sqrt{8}}\Big).
\end{equation*}
}
\end{remark}

\subsection{Option prices and Dupire's forward equation}


In this section we provide (in Theorem~\ref{Dupire_rep} below) another
representation of option prices. This representation 
gives the price of an in-the-money option in terms 
of a weighted integral of the corresponding ATM price over the time 
until maturity.
It is  based on the well known 
Dupire's forward equation (Dupire~\cite{Dupire},~\cite{Dupire2}), 
which we recall below. 

If the underlying  price follows the local volatility model
$$dS_t=\sigma(t, S_t)S_tdW_t,$$ 
then  the  price $\C(K,T)$ of a European call option with strike $K$ and time to expiry
$T$ satisfies  the equation (the forward equation)
\begin{equation*}
\frac{\partial \C(K, T)}{\partial T}=\frac{1}{2}K^2\sigma^2(T,K)\frac{\partial^2 \C(K,T)}{\partial K^2}.
\end{equation*}
It follows from  the forward  equation that  
\begin{align*}
	\C\left(K,T\right)-\left(S_0-K\right)_+
	&=\int\limits_0^T\frac{\partial}{\partial t}\C\left(K,t\right)dt \\
	&=\frac{K^2}{2}\int\limits_0^T\sigma^2(t,K)\frac{\partial^2}{\partial K^2}\C\left(K,t\right)dt\\
	&=\frac{K^2}{2}\int\limits_0^T\sigma^2(t,K)\frac{\partial^2}{\partial K^2}
	\E[\max(S_t-K,0)|S_0]dt\\
	&=\frac{K^2}{2}\int\limits_0^T\sigma^2 (t,K)\E[\delta (S_t-K)|S_0]dt,
\end{align*}
where $\delta(\cdot)$ is the delta-function.
Noting  that $\E[\delta(S_t-K)|S_0]=p_S(S_0,K,t)$,
 where $p_S(S_0,\cdot,t)$, is the probability density function of $S_t$ given 
$S_0$, we arrive to the equation
\begin{equation}
\label{Dup1}
\C(K,T)-(S_0-K)_+=\frac{K^2}{2}\int\limits_0^T\sigma^2(t,K)p_S(S_0,K, t)dt,
\end{equation}
which we are going to use in the proof of Theorem~\ref{Dupire_rep} below.
\begin{theorem}
\label{Dupire_rep}
If  $S_0=1$ and $K>1$, then 
\begin{equation*}
  \C(1,K,T)=\sqrt{K}\int\limits_0^T\Vatm(T-s)h(s,\log(K)/\sgp)
e^{-\frac{1}{8} \sgp^2 s}ds;
\end{equation*}
if  $S_0=1$ and $K<1$, then
\begin{equation*}
\Put(1,K,T)=\frac{1}{\sqrt{K}}\int\limits_0^T\Vatm(T-s)h(s,-\log(K)/\sgm)
e^{-\frac{1}{8} \sgm^2 s}ds,
\end{equation*}
where in both cases $\Vatm$ is the ATM price.
\end{theorem}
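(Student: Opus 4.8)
The plan is to feed the model's transition density into the integrated Dupire equation~\eqref{Dup1} and then recognise the resulting double integral as the claimed convolution. Since $S_0=1$ and $K>1$, the intrinsic-value term vanishes, $(S_0-K)_+=0$, and the local volatility at the strike is the constant $\sgp$ because $\log K>0$ keeps $K$ away from the discontinuity at $1$; thus~\eqref{Dup1} collapses to
\begin{equation*}
\C(1,K,T)=\frac{\sgp^2 K^2}{2}\int_0^T p_S(1,K,t)\,dt,
\end{equation*}
where $p_S(1,\cdot,t)$ is the density of $S_t$ started at $S_0=1$. The whole task is therefore to evaluate $\int_0^T p_S(1,K,t)\,dt$ in terms of the ATM price and the first-passage density $h$.

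Next I would transport the density of $S_t$ to that of the process $X_t$ of~\eqref{X_t}. On $\{S\ge 1\}$ the map $S=e^{\sgp x}$ has Jacobian $dS/dx=\sgp S$, so at $S=K$ (equivalently $x=\log K/\sgp>0$) one gets $p_S(1,K,t)=p(0,\log K/\sgp,t)/(\sgp K)$, with $p(0,\cdot,t)$ the density furnished by Theorem~\ref{L0}. Substituting its $x\ge 0$ branch, the prefactor $e^{-\frac{\sgp}{2}x}$ evaluated at $x=\log K/\sgp$ equals $K^{-1/2}$ — this is exactly the source of the $\sqrt K$ in the statement — and after cancelling the remaining constants one is left with
\begin{equation*}
\C(1,K,T)=p\sgp\sqrt K\int_0^T\!\!\int_0^t \phi(t-s)\,h(s,\log K/\sgp)\,e^{-\frac{\sgp^2}{8}s}\,ds\,dt.
\end{equation*}
Applying Fubini on the triangle $0\le s\le t\le T$, pulling out $h(s,\log K/\sgp)e^{-\sgp^2 s/8}$ and substituting $r=t-s$ in the inner integral turns the $t$-integration into $\int_0^{T-s}\phi(r)\,dr$, giving
\begin{equation*}
\C(1,K,T)=\sqrt K\int_0^T \Big(p\sgp\int_0^{T-s}\phi(r)\,dr\Big)\,h(s,\log K/\sgp)\,e^{-\frac{\sgp^2}{8}s}\,ds.
\end{equation*}

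Comparing with the target, the one step that is not pure bookkeeping — and the point I expect to be the crux — is the identity $\Vatm(u)=p\sgp\int_0^u\phi(r)\,dr$. I would establish it directly from~\eqref{call-put-atm}: at $k=0$ the relevant integrand is $\psi(a,s,0)\,e^{-\frac{a^2}{2}s}=\frac{1}{\sqrt{2\pi s}}e^{-\frac{a^2}{2}s}+a\,\Nc(a\sqrt s)$, so in the difference $F(T,\tfrac{\sgp}{2},0)-F(T,-\tfrac{\sgp}{2},0)$ the Gaussian pieces (which depend only on $a^2$) cancel, while the two normal-cdf pieces combine through $\Nc(z)+\Nc(-z)=1$ into the constant $\tfrac{\sgp}{2}$; hence $F(T,\tfrac{\sgp}{2},0)-F(T,-\tfrac{\sgp}{2},0)=\tfrac{\sgp}{2}\int_0^T\phi(r)\,dr$ and~\eqref{call-put-atm} yields $\Vatm(u)=p\sgp\int_0^u\phi(r)\,dr$. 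Substituting $\Vatm(T-s)$ for the bracketed factor closes the call case. I note for the put that the put line of~\eqref{call-put-atm}, read off the $k<0$ branch of $\psi$, gives the same quantity written as $q\sgm\int_0^u\phi(r)\,dr$, which is consistent because $p\sgp=q\sgm=\tfrac{\sgp\sgm}{\sgp+\sgm}$.

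The put case with $K<1$ is entirely parallel. The put version of Dupire's identity has the same form with $(K-S_0)_+=0$ and $\sigma(t,K)=\sgm$ (since $\log K<0$); the change of variables now uses $S=e^{\sgm x}$ on $\{S<1\}$ with Jacobian $\sgm S$, and the $x<0$ branch of Theorem~\ref{L0} supplies the prefactor $e^{-\frac{\sgm}{2}x}$ at $x=\log K/\sgm$. Carrying out the same Fubini step and inserting $\Vatm(T-s)=q\sgm\int_0^{T-s}\phi(r)\,dr$ produces the put representation, with $h(s,\log K/\sgm)=h(s,-\log K/\sgm)$ since $h$ depends on its spatial argument only through $|x|$. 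As a cross-check, the same formula follows from the call case by the change of numéraire $\tilde S_t=1/S_t$, which is again a two-valued LVM with the roles of $\sgp$ and $\sgm$ interchanged and which maps a put of strike $K<1$ to a multiple of a call of strike $1/K>1$.
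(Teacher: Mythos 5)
Your call-case argument is correct and coincides, step for step, with the paper's own proof: Dupire's identity~\eqref{Dup1} with $\sigma(t,K)=\sgp$, the density transport $p_S(1,K,t)=p(0,\log K/\sgp,t)/(K\sgp)$, the factor $e^{-\frac{\sgp}{2}x}=K^{-1/2}$ coming from Theorem~\ref{L0}, Fubini over the triangle $0\le s\le t\le T$, and the identity $\Vatm(u)=p\sgp\int_0^u\phi(r)\,dr$. (You rederive that identity from~\eqref{call-put-atm}; the paper obtains it inside the proof of Corollary~\ref{ATM-price} via $\psi(a,s,0)-\psi(-a,s,0)=a\,e^{\frac{a^2}{2}s}$ --- the same cancellation.)

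In the put case, however, your claim that the parallel computation ``produces the put representation'' is not justified, and it is in fact false for the formula as printed. Carrying out exactly the steps you list: Dupire gives $\Put(1,K,T)=\frac{K^2\sgm^2}{2}\int_0^T p_S(1,K,t)\,dt$; the Jacobian gives $p_S(1,K,t)=p(0,\log K/\sgm,t)/(K\sgm)$; and Theorem~\ref{L0} supplies $e^{-\frac{\sgm}{2}x}=K^{-1/2}$ at $x=\log K/\sgm$. Collecting constants yields $\frac{K\sgm}{2}\cdot 2q\cdot K^{-1/2}=q\sgm\sqrt{K}$, so after Fubini one arrives at
\begin{equation*}
\Put(1,K,T)=\sqrt{K}\int\limits_0^T\Vatm(T-s)\,h\big(s,-\log(K)/\sgm\big)\,
e^{-\frac{1}{8}\sgm^2 s}\,ds,
\end{equation*}
i.e.\ prefactor $\sqrt{K}$, not the $1/\sqrt{K}$ in the statement. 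Your own num\'eraire cross-check detects the same thing: $\Put(1,K,T)$ equals $K$ times the price of a call with strike $1/K$ in the model with $\sgp$ and $\sgm$ interchanged, and the already-proved call representation then gives the prefactor $K\cdot(1/K)^{1/2}=\sqrt{K}$. The Black--Scholes reduction $\sgp=\sgm$ settles it: only the $\sqrt{K}$ version is compatible with $\PBS$, via the identity $\PBS(\sigma,1,K,T)=K\,\CBS(\sigma,1,1/K,T)$. So the put formula in the theorem as printed contains a typo ($1/\sqrt{K}$ should be $\sqrt{K}$); your derivation is sound, but the step where you assert that it matches the printed statement is precisely where the mismatch should have been caught rather than waved through.
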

The proof of Theorem~\ref{Dupire_rep} is given  in Section~\ref{A}.

\subsection{Black-Scholes approximation}
\label{BS}

In this section we discuss  the approximation of  option prices in the two-valued LVM 
model (i.e. $\C(1, K, T)$ and $\Put(1, K, T)$)  by the corresponding BS 
prices.

Denote by 
$\CBS(\sigma, S_0, K, T)$ and  $\PBS(\sigma, S_0, K, T)$ the  BS prices of 
 European call option and European put option respectively  with 
 strike $K$ and time to maturity $T$,
  when  volatility  of the underlying asset is equal to $\sigma$.
\begin{theorem}
\label{BS-approx}
If  $K>1$, then 
$$|\C(1, K, T)-2p\CBS(\sgp, 1, K, T)|\leq cT,$$
and if $K<1$, then 
$$|\Put(1, K, T)-2q\PBS(\sgm, 1, K, T)|\leq cT$$
for some constant $c$ and all sufficiently small $T$.
\end{theorem}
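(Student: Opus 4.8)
The plan is to descend from prices to densities of the terminal log-price, write both $\C(1,K,T)$ and $2p\CBS(\sgp,1,K,T)$ as integrals of the \emph{same} payoff weight against these densities, and then estimate the difference of the densities. Throughout set $k=\log(K)/\sgp>0$ and $\lambda=\sgp^2/8$. First I would record the two representations. By Theorem~\ref{L0}, for $x\geq 0$ the density of $X_T$ is $p(0,x,T)=2p\,e^{-\frac{\sgp}{2}x}G(x,T)$ with $G(x,T)=\int_0^T\phi(T-s)h(s,x)e^{-\lambda s}ds$; since $\{S_T>K\}=\{X_T>k\}$ and $S_T=e^{\sgp X_T}$ there,
\[
\C(1,K,T)=2p\int_k^\infty\big(e^{\sgp x}-e^{\sgp k}\big)e^{-\frac{\sgp}{2}x}G(x,T)\,dx.
\]
For the Black--Scholes model with volatility $\sgp$ the process $\log(S_t)/\sgp$ is a Brownian motion with drift $-\sgp/2$, whose density is $\frac{1}{\sqrt{2\pi T}}e^{-(x+\frac{\sgp}{2}T)^2/(2T)}=e^{-\frac{\sgp}{2}x}\,\frac{e^{-\lambda T}}{\sqrt{2\pi T}}e^{-x^2/(2T)}$, so that $\CBS(\sgp,1,K,T)$ is given by the same integral with $G(x,T)$ replaced by $\frac{e^{-\lambda T}}{\sqrt{2\pi T}}e^{-x^2/(2T)}$. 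Hence the prefactor $2p$ cancels out of the comparison, and it remains to control $G(x,T)-\frac{e^{-\lambda T}}{\sqrt{2\pi T}}e^{-x^2/(2T)}$ uniformly in $x\ge k$.

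The main device is the first-passage decomposition of the (killed) Gaussian kernel. Since $h(\cdot,x)$ is the first passage density to $0$ from $x$ and $1/\sqrt{2\pi u}$ is the return density at $0$, the strong Markov property gives $\frac{1}{\sqrt{2\pi T}}e^{-x^2/(2T)}=\int_0^T h(s,x)\frac{ds}{\sqrt{2\pi(T-s)}}$, and inserting the killing factor $e^{-\lambda(\cdot)}$ yields
\[
\frac{e^{-\lambda T}}{\sqrt{2\pi T}}e^{-\frac{x^2}{2T}}=\int_0^T h(s,x)e^{-\lambda s}\,\frac{e^{-\lambda(T-s)}}{\sqrt{2\pi(T-s)}}\,ds.
\]
Subtracting this from $G(x,T)$ represents the difference as $\int_0^T D(T-s)\,h(s,x)e^{-\lambda s}\,ds$, where $D(u)=\phi(u)-\frac{e^{-\lambda u}}{\sqrt{2\pi u}}$. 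The crux is that the two leading singularities cancel: a short expansion of~\eqref{phi} gives $\phi(u)=\frac{1}{\sqrt{2\pi u}}+O(\sqrt{u})$, while $\frac{e^{-\lambda u}}{\sqrt{2\pi u}}=\frac{1}{\sqrt{2\pi u}}+O(\sqrt{u})$, so the $1/\sqrt{2\pi u}$ terms drop and in fact $D(u)=O(\sqrt{u})$. Consequently $|D(u)|\le C\sqrt{u}\le C\sqrt{T}$ on $[0,T]$ for all small $T$.

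With this bound I would conclude by Fubini. Using $|D(T-s)|\le C\sqrt{T}$,
\[
\big|\C(1,K,T)-2p\CBS(\sgp,1,K,T)\big|\le 2pC\sqrt{T}\int_0^T e^{-\lambda s}\Big(\int_k^\infty\big(e^{\sgp x}-e^{\sgp k}\big)e^{-\frac{\sgp}{2}x}h(s,x)\,dx\Big)ds,
\]
and the inner integral is bounded by $\psi(\sgp/2,s,k)$ from~\eqref{psi}; splitting $\psi$ into its $1/\sqrt{2\pi s}$-part and its $\Nc$-part and integrating in $s$ over $[0,T]$ gives a factor of order $\sqrt{T}$, so the whole bound is of order $\sqrt{T}\cdot\sqrt{T}=T$. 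The put case $K<1$ is identical after replacing $\sgp,p,k$ by $\sgm,q,-\log(K)/\sgm$ and using the $x<0$ branch of Theorem~\ref{L0} together with $\PBS$. The one step requiring genuine care is the cancellation of singularities in $D$, i.e.\ the regularity of $\phi$ at $0$; everything else is a routine convolution estimate. Obtaining a bound \emph{uniform} in $K$ (rather than merely for fixed $K$, where both prices are already exponentially small in $1/T$) is precisely what forces the use of the first-passage identity together with the $\psi$-estimate.
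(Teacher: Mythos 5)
Your proof is correct, and it does prove the statement; but its engine differs from the paper's, so a comparison is worthwhile. The common part is the frame: both you and the paper write $\C(1,K,T)$ and $2p\CBS(\sgp,1,K,T)$ as integrals of the payoff $(e^{\sgp x}-e^{\sgp k})e^{-\sgp x/2}$ against two kernels and then estimate the kernel difference. The paper, however, never touches $\phi$, its short-time expansion, or the first-passage identity for the Gaussian: it compares the kernels one level lower, \emph{before} the local time and occupation times are integrated out. There, both kernels are triple integrals of the common factor $h(v,\ell p)h(u,\ell q)h(s,x)$ (that the Black--Scholes kernel has this form with the \emph{same} $p$ follows from the identity $\int_0^\infty\int_0^{w}h(v,\ell p)h(w-v,\ell q)\,dv\,d\ell=1/\sqrt{2\pi w}$, valid for every $p$), and they differ only in the weights $e^{-(\sgp^2v+\sgm^2u+\sgp^2s)/8}$ versus $e^{-\sgp^2T/8}$; since the occupation time $u$ of the negative half-line is at most $T$, these weights differ by $O(T)$ pointwise, and the theorem drops out at once, with no cancellation of singularities anywhere. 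You instead work with the marginal density of Theorem~\ref{L0} and manufacture the factor $T$ as $\sqrt T\cdot\sqrt T$: one $\sqrt T$ from the genuine cancellation $D(u)=\phi(u)-e^{-\lambda u}/\sqrt{2\pi u}=O(\sqrt u)$ (with your $\lambda=\sgp^2/8$; the required expansion of $\phi$ is the one the paper proves for Proposition~\ref{F-asymp}, so it is legitimately available), the other from $\int_0^T\psi(\sgp/2,s,k)e^{-\lambda s}\,ds=O(\sqrt T)$. As for what each approach buys: the paper's is more elementary, and its constant is automatically uniform over all $K>1$, because the pointwise $O(T)$ bound gets integrated against the Gaussian kernel and yields $O(T)\cdot\CBS(\sgp,1,K,T)\le O(T)$; yours is sharper in $T$ --- if you keep the exact inner integral $\psi(\sgp/2,s,k)-e^{\sgp k}\psi(-\sgp/2,s,k)$, whose singular parts $\frac{1}{\sqrt{2\pi s}}e^{\sgp k/2-k^2/(2s)}$ cancel identically, your bound improves to $O(T^{3/2})$ for fixed $K$. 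One claim in your closing sentence should be corrected, though: your bound is \emph{not} uniform over all $K>1$, since the $\psi$-estimate carries the factor $e^{\sgp k/2}=\sqrt K$; it is uniform only on bounded sets of strikes. That is all the paper's later application (where $K=e^{\gamma\sqrt T}\to1$) actually needs, and it is more than the literal statement (fixed $K$, where, as you rightly note, both prices are exponentially small anyway) requires --- but if full uniformity in $K$ is the goal, it is the paper's cruder joint-density comparison, not your convolution estimate, that delivers it.
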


The proof of Theorem~\ref{BS-approx} is given in Section~\ref{BS-proof}.

\begin{remark}
{\rm The BS approximation of option prices in Theorem~\ref{BS-approx}
is an important tool in our analysis of implied volatility in  Section~\ref{Volatility}.
Another application of the 
 approximation is given in Section~\ref{BS-study},
where it is used for estimating implied volatility.

  It should be also noted that the BS approximation is not immediately visible from the final pricing formulas.  However, it is
   readily obtained, if the option price is written in terms of an integral of the  joint density of the underlying price process  and  its functionals. 
}
\end{remark}

\section{Implied volatility}
\label{Volatility}

In this section we use the BS approximation to obtain 
some  results from~\cite{Pigato} concerning implied volatility. 
Recall that implied volatility 
  $\sigmaBS(T, k)$ is  considered 
as a function of maturity $T$ and log-moneyness $k=\log(K/S_0)$.
Note that we have $k=\log K$  in the case when  $S_0=1$. 

Start with a remark that is almost verbatim to Remark 3.4 in~\cite{Pigato}
concerning the ATM implied volatility $\sigatm(T):=\sigmaBS(T,0)$.
By definition, $\sigatm(T)$ is  the solution of the equation
\begin{equation*}
\CBS(\sigatm(T), 1, 1, T)=\Vatm(T).
\end{equation*}
Recall that given volatility $\sigma$ we have that 
\begin{equation}
\label{BS-ATM}
\begin{split}
\BS(\sigma, T)&:=
\CBS(\sigma, 1, 1, T)
=\Nc\Big(\frac{\sigma\sqrt{T}}{2}\Big)-\Nc\Big(-\frac{\sigma\sqrt{T}}{2}\Big)
={\rm Erf}\Big(\frac{\sigma\sqrt{T}}{\sqrt{8}}\Big).
\end{split}
\end{equation}
Therefore,
$$
\sigatm(T)=\frac{\sqrt{8}}{\sqrt{T}}{\rm Erf}^{-1}\big(\Vatm(T)\big).
$$
Further, using that
$${\rm Erf}^{-1}(x)=
\frac{\sqrt{\pi}}{2}\Big(x+\frac{\pi}{12}x^3\Big)+o(x^3), \text{ as }  x\to 0,$$
we obtain  the short term expansion for the ATM implied volatility 
\begin{equation*}
\sigatm\left(T\right)=2 \frac{\sgm\sgp}{ \left(\sgm+\sgp\right)}-\frac{\left(\sgm\sgp\right)^2
\left(\sgm-\sgp\right)^2}{12 \left(\sgm+\sgp\right)^3}T+o(T), \text{ as } T\to 0.
\end{equation*}
In addition, note that 
\begin{equation*}
\sigatm:=\sigatm\left(0\right)=2 \frac{\sgm\sgp}{ \left(\sgm+\sgp\right)}
=2p\sgp=2q\sgm.
\end{equation*}

\subsection{Implied volatility in the central limit regime}

In this section we consider the short term  behaviour 
of implied volatility  in the case when strike and maturity 
are related by the equation $K=e^{\gamma\sqrt{T}}$. 
This case was considered in~\cite[Theorem 3.1]{Pigato},
where it was called  the central limit regime (and we use  the same  terminology).

One of the results of the aforementioned theorem is the equation
\begin{equation}
\label{iv-eq}
\begin{split}
\lim_{T\to 0}\sigmaBS(T, \gamma\sqrt{T}) 
&=\frac{2 \sgm\sgp}{\sgm+\sgp}+
\frac{\sqrt{\pi}}{\sqrt{2}}\frac{ \sgp-\sgm}{\sgm+\sgp}\gamma\\
&+ 
\frac{\sgp-\sgm}{ \sgp\sgm}\Big(\frac{\sgp-\sgm}{ 2\left(\sgm+\sgp\right)}-
\sign(\gamma)\Big)\gamma^2
 + o(\gamma^2), \text{ as }\gamma\to 0.
\end{split}
\end{equation}
In other words,  the implied volatility  
$\sigmaBS(T, \gamma\sqrt{T})$ can be
 approximated for short term maturities $T$  by a quadratic  function of $\gamma$,  
 which can be computed explicitly.  Below we  compute this quadratic function 
 by using the BS approximation derived in Theorem~\ref{BS-approx}.

For definiteness assume that $\gamma>0$ (i.e. $K=e^{\gamma\sqrt{T}}>1$) and 
use our result for prices of call options.
Let $\C(1,e^{\gamma\sqrt{T}},T)$ be  the call price  in the two-valued LVM
 (Theorem~\ref{Call-Put}).
The equation for the  implied volatility $\sigmaBS$  is 
\begin{equation}
\label{ivol0}
\CBS\big(\sigmaBS, 1, e^{\gamma\sqrt{T}}, T\big)=\C\big(1,e^{\gamma\sqrt{T}},T\big),
\end{equation} 
where the left hand side is 
the BS price of the call option with maturity $T$ and the strike $K=e^{\gamma\sqrt{T}}$. 
Then
$$\CBS\big(\sigma, 1, e^{\gamma\sqrt{T}}, T\big)=\Nc(d_1)-
e^{\gamma\sqrt{T}}\Nc(d_0),$$
where 
$d_1=-\frac{\gamma}{\sigma}+\frac{\sigma\sqrt{T}}{2}$  and $d_0=-\frac{\gamma}{\sigma}-\frac{\sigma\sqrt{T}}{2}$.
By Taylor's theorem we have  that
\begin{align}
\label{BS-Taylor}
\CBS(\sigma, 1, e^{\gamma\sqrt{T}}, T)
&= \bigg(\frac{\sigma}{\sqrt{2\pi}}-\frac{\gamma}{2}+\frac{\gamma^2}{\sigma\sqrt{2\pi}}\bigg)
\sqrt{T}+o\big(\sqrt{T}\big), \text{ as }  T\to 0.
\end{align}
Further, combining the BS approximation (Theorem~\ref{BS-approx}) for the right hand side of~\eqref{ivol0}
with~\eqref{BS-Taylor} gives that
\begin{equation}
\label{Loc-Taylor}
\begin{split}
\C\big(1,e^{\gamma\sqrt{T}},T\big)&=2p\CBS\big(\sgp, 1, e^{\gamma\sqrt{T}}, T\big)\\
&=2p\bigg(\frac{\sgp}{\sqrt{2\pi}}-\frac{\gamma}{2}+
\frac{\gamma^2}{\sgp\sqrt{2\pi}}\bigg)\sqrt{T}+o\big(\sqrt{T}\big).
\end{split}
\end{equation} 
Replacing   both the left hand and the right hand sides of equation~\eqref{ivol0} by their  approximations
(provided by equations~\eqref{BS-Taylor} and~\eqref{Loc-Taylor} respectively) we obtain the equation
\begin{equation*}
\bigg(\frac{\sigma}{\sqrt{2\pi}}-\frac{\gamma}{2}+
\frac{\gamma^2}{\sigma\sqrt{2\pi}}\bigg)
\sqrt{T}=2p\bigg(\frac{\sgp}{\sqrt{2\pi}}-\frac{\gamma}{2}+
\frac{\gamma^2}{\sgp\sqrt{2\pi}}\bigg)\sqrt{T}+o\big(\sqrt{T}\big),
\end{equation*}
which means that under the assumptions made  the implied volatility 
$\sigmaBS(\gamma, e^{\gamma\sqrt{T}})$ 
converges to a limit, as $T\to 0$, and, moreover, this limit 
can be estimated by the solution of the equation
\begin{equation}
\label{ivol1}
\frac{\sigma}{\sqrt{2\pi}}-\frac{\gamma}{2}+\frac{\gamma^2}{\sigma\sqrt{2\pi}}
=p\sgp\sqrt{\frac{2}{\pi}}-p\gamma+\frac{p}{\sgp}\sqrt{\frac{2}{\pi}}\gamma^2.
\end{equation}
It is easy to see that equation~\eqref{ivol1} is basically 
a quadratic equation for the unknown $\sigma$ with coefficients analytically depending on $\gamma$.
 This implies that the solution is 
 an analytic  function $\sigma(\gamma)$ of $\gamma$.
Consider  Taylor's expansion of the second order for 
this function at $\gamma=0$, that is 
\begin{equation*}
\sigma(\gamma)=
\bigg(c_0+c_1\gamma +\frac{1}{2}c_2\gamma ^2\bigg)+o\big(\gamma ^2\big),
\text{ as } \gamma\to 0,
\end{equation*}
where   $c_0$, $c_1$ and $c_2$ denote the values at $0$ of the function itself,  
its 1st and 2nd  derivatives respectively.
Using this expansion for approximating the left hand side of~\eqref{ivol1}
gives the equation
\begin{align*}
\frac{c_0+c_1\gamma +\frac{1}{2}c_2\gamma ^2}{\sqrt{2\pi}}&-
\frac{\gamma}{2}+\frac{\gamma^2}{\sqrt{2\pi}}
\frac{1}{(c_0+c_1\gamma +\frac{1}{2}c_2\gamma ^2)}\\
&=\frac{c_0}{\sqrt{2\pi}}+\left(\frac{c_1}{\sqrt{2\pi}}-\frac{1}{2}\right)\gamma+
\frac{1}{\sqrt{2\pi}}\left(\frac{c_2}{2}+\frac{1}{c_0}\right)\gamma^2+o(\gamma^2),
\end{align*}
which, in turn, implies that 
\begin{equation}
\label{ivol3}
\begin{split}
\frac{c_0}{\sqrt{2\pi}}+\bigg(\frac{c_1}{\sqrt{2\pi}}-\frac{1}{2}\bigg)\gamma+
\frac{1}{\sqrt{2\pi}}\bigg(\frac{\sigma_2}{2}+\frac{1}{c_0}\bigg)\gamma^2
&
=p\sgp\frac{\sqrt{2}}{\sqrt{\pi}}-p\gamma+\frac{p}{\sgp}\frac{\sqrt{2}}{\sqrt{\pi}}\gamma^2+o\big(\gamma^2\big).
\end{split}
\end{equation}
Equating coefficients at $\gamma^i,\, i=0,1,2$ in~\eqref{ivol3}
we obtain that 
$$c_0=2p\sgp,  c_1=\frac{\sqrt{2}}{\sqrt{\pi}}(1-2p) \text{ and } 
c_2=\frac{4p^2-1}{p\sgp},$$
and, hence,
\begin{equation}
\label{imp-gamma-call}
\begin{split}
\sigma(\gamma)&=2p\sgp +\frac{\sqrt{2}}{\sqrt{\pi}}(1-2p)\gamma+\frac{1}{2}\frac{4p^2-1}{p\sgp}\gamma^2
+o(\gamma^2)\\
&=
\frac{2 \sgm\sgp}{\sgm+\sgp}+
\frac{\sqrt{\pi}}{\sqrt{2}}\frac{ \sgp-\sgm}{\sgm+\sgp}\gamma + 
\frac{\sgp-\sgm}{ \sgp\sgm}\bigg(\frac{\sgp-\sgm}{ 2(\sgm+\sgp)}-
1\bigg)\gamma^2+o\big(\gamma^2\big).
\end{split}
\end{equation}
Alternatively, one can use the put price  and  repeat the above argument 
in the case when $\gamma<0$, i.e. $K=e^{\gamma\sqrt{T}}<1$,  to obtain that
\begin{equation}
\label{imp-gamma-put}
\begin{split}
\sigma(\gamma)&=
2q\sgm+\frac{\sqrt{2}}{\sqrt{\pi}}(2q-1)\gamma +\frac{1}{2}\frac{4q^2-1}{q\sgm}\gamma^2
+o\big(\gamma^2\big)\\
&=
\frac{2 \sgm\sgp}{\sgm+\sgp}+
\frac{\sqrt{\pi}}{\sqrt{2}}\frac{ \sgp-\sgm}{\sgm+\sgp}\gamma + 
\frac{\sgp-\sgm}{ \sgp\sgm}\bigg(\frac{\sgp-\sgm}{ 2(\sgm+\sgp)}+
1\bigg)\gamma^2+o\big(\gamma^2\big).
\end{split}
\end{equation}
Finally, note  that~\eqref{imp-gamma-call} and~\eqref{imp-gamma-put} are special cases of~\eqref{iv-eq}
depending on the sign of $\gamma$.

\subsection{ATM implied volatility skew}

Recall that
$\CBS(\sigma,1,e^k,T)$ 
is the  BS price (with  volatility $\sigma$) of the call option with  the 
log-strike $k=\log K$ and maturity $T$. As before,  let $\C(1, e^k, T)$ be the call price of the same option  in the two-valued LVM.
Given the log-strike $k$ and maturity $T$ the corresponding  implied volatility
 $\sigmaBS(T,k)$  is  defined as the solution of the  equation 
$\CBS(\sigma, 1, e^k, T)=\C(1, e^k, T)$ for $\sigma$.

Denote $\partial_k=\frac{\partial }{\partial k}$.

\begin{theorem}
\label{ATM_Skew}
The ATM skew is given by 
\begin{equation}
\label{sigmaBS(T,0)}
\begin{split}
\partial_k\sigmaBS(T,0)
&=\frac{\sqrt{\pi}}{\sqrt{2T}}e^{\frac{1}{8}\sigmaBS^2(T,0)T} 
\bigg(1-\frac{2\sgm}{\sgm+\sgp}\Big(
F\big(T,-\sgp/2,0\big)+F\big(T,\sgp/2,0\big)\Big)\bigg)
\end{split}
\end{equation}
and
\begin{equation}
\label{skewATM}
\partial_k \sigmaBS(T,0)
=
\frac{\sqrt{\pi}}{\sqrt{2T}}
\frac{\sgp-\sgm}{\sgm+\sgp}+o\big(\sqrt{T}\big), \text{ as } T\to 0.
\end{equation}
\end{theorem}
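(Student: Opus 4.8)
\emph{The plan} is to obtain the exact identity~\eqref{sigmaBS(T,0)} by implicitly differentiating the equation that defines implied volatility, and then to read off the short-time behaviour~\eqref{skewATM} from it. Write $\sigma=\sigmaBS(T,0)$. Differentiating $\CBS(\sigmaBS(T,k),1,e^k,T)=\C(1,e^k,T)$ in $k$ and setting $k=0$ gives, by the chain rule,
$$\partial_k\sigmaBS(T,0)=\frac{\partial_k\C(1,e^k,T)\big|_{k=0}-\partial_k\CBS(\sigma,1,e^k,T)\big|_{k=0}}{\partial_\sigma\CBS(\sigma,1,1,T)},$$
so three ingredients are needed. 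The at-the-money vega is the routine $\partial_\sigma\CBS(\sigma,1,1,T)=\sqrt{T/(2\pi)}\,e^{-\sigma^2T/8}$. For the explicit strike derivative of the BS price, with $d_{1,2}=-k/(\sigma\sqrt T)\pm\sigma\sqrt T/2$ so that $\CBS(\sigma,1,e^k,T)=\Nc(d_1)-e^k\Nc(d_2)$, I would use the density identity $\Nc'(d_1)=e^k\Nc'(d_2)$ (valid since $d_1^2-d_2^2=-2k$): the two density terms cancel, leaving $\partial_k\CBS=-e^k\Nc(d_2)$, which equals $-\Nc(-\sigma\sqrt T/2)$ at $k=0$.

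For the model price I would differentiate the formula of Theorem~\ref{Call-Put}. The third argument of $F$ enters only through $\psi$, and since $\partial_k\psi(a,s,0)=0$ (because $\partial_k\psi(a,s,k)=\mp e^{ak}h(s,k)$ and $h(s,0)=0$), every term carrying $\partial_kF(T,a,0)$ vanishes; only the factor $e^{\sgp k}$ is differentiated, giving $\partial_k\C(1,e^k,T)\big|_{k=0}=-2p\,F(T,-\sgp/2,0)$. Substituting the three quantities and simplifying is the decisive algebraic step: I would remove $\Nc(-\sigma\sqrt T/2)$ through the at-the-money BS value~\eqref{BS-ATM}, namely $\Vatm(T)=1-2\Nc(-\sigma\sqrt T/2)$, and remove $\Vatm(T)$ through~\eqref{call-put-atm}, $\Vatm(T)=2p\big(F(T,\sgp/2,0)-F(T,-\sgp/2,0)\big)$. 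A short computation collapses the numerator to $\tfrac12\big(1-2p(F(T,\sgp/2,0)+F(T,-\sgp/2,0))\big)$; combining with the vega, using $\tfrac{\sqrt{2\pi}}{2\sqrt T}=\tfrac{\sqrt\pi}{\sqrt{2T}}$ and $2p=2\sgm/(\sgm+\sgp)$, reproduces exactly~\eqref{sigmaBS(T,0)}.

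For the asymptotics~\eqref{skewATM} I would expand the bracket of~\eqref{sigmaBS(T,0)} as $T\to0$. Since $\sigma\to 2p\sgp$ we have $e^{\sigma^2T/8}=1+O(T)$, so it suffices to show $1-2p\big(F(T,\sgp/2,0)+F(T,-\sgp/2,0)\big)=\tfrac{\sgp-\sgm}{\sgm+\sgp}+O(T)$. Writing $F(T,\pm\sgp/2,0)=\int_0^T\phi(T-s)\,g_\pm(s)\,ds$ with $g_\pm(s)=\psi(\pm\sgp/2,s,0)e^{-\sgp^2s/8}$, the short-time expansions $g_\pm(s)=\tfrac{1}{\sqrt{2\pi s}}\pm\tfrac{\sgp}{4}+O(\sqrt s)$ and $\phi(t)=\tfrac{1}{\sqrt{2\pi t}}+O(\sqrt t)$ (with vanishing constant term) are the key inputs. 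The leading singular-times-singular convolution gives $F(T,\pm\sgp/2,0)=\tfrac12+O(\sqrt T)$ via $\int_0^T(s(T-s))^{-1/2}ds=\pi$, so the sum tends to $1$ and the limit of the bracket is $1-2p=\tfrac{\sgp-\sgm}{\sgm+\sgp}$. Crucially, the constant terms $\pm\tfrac{\sgp}{4}$ cancel in $g_++g_-$, and $\phi$ has no constant term; since a $T^{1/2}$ contribution to the convolution would require an $s^0$ or $t^0$ coefficient, the sum $F(T,\sgp/2,0)+F(T,-\sgp/2,0)$ carries no $\sqrt T$ term and its remainder is $O(T)$. Multiplying by $\tfrac{\sqrt\pi}{\sqrt{2T}}e^{\sigma^2T/8}$ then yields~\eqref{skewATM}.

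\emph{The main obstacle} is precisely this cancellation: one must verify that the $T^{1/2}$ coefficient of $F(T,\sgp/2,0)+F(T,-\sgp/2,0)$ vanishes, which hinges on the absence of a constant term in the short-time expansion of $\phi$ together with the antisymmetry of the $O(1)$ terms of $g_\pm$, and it requires a careful justification of term-by-term integration of the singular convolutions (Beta-function integrals and control of the remainders). A conceptually transparent alternative, worth recording, is to note that $2p\,F(T,\sgp/2,0)=\E[S_T\bi_{\{S_T\geq1\}}]$ and $2p\,F(T,-\sgp/2,0)=\P(S_T\geq1)$ (because $e^{\sgp X_T}=S_T$ on $\{S_T\geq1\}$) and to apply Black--Scholes-type approximations to these asset- and cash-digitals in the spirit of Theorem~\ref{BS-approx}; their sum then telescopes through $\Nc(x)+\Nc(-x)=1$ to $2p+O(T)$, giving the same conclusion.
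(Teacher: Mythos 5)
Your derivation of~\eqref{sigmaBS(T,0)} coincides with the paper's own proof step for step: the same chain-rule identity (the paper's~\eqref{sigmaBS_k}), the same three ingredients (the ATM vega, $\partial_k\CBS$ at $k=0$ re-expressed through $\Vatm(T)$ via~\eqref{BS-ATM}, and $\partial_k\C(1,1,T)=-2pF(T,-\sgp/2,0)$, which is the paper's Proposition~\ref{log-strike-derivative}), and the same collapse of the numerator using~\eqref{call-put-atm}; your treatment of~\eqref{skewATM} likewise reproduces the paper's Proposition~\ref{F-asymp} (short-time expansions of $\psi$ and $\phi$, Beta-type convolution integrals, cancellation of the odd terms).

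Two caveats, both of which you share with the paper's own write-up. First, the step ``$\partial_k\psi(a,s,0)=0$, hence $\partial_kF(T,a,0)=0$'' is not a valid justification: it requires differentiating under the $s$-integral at $k=0$, and the interchange fails exactly there, because $h(s,k)\,ds$ concentrates at $s=0$ as $k\downarrow 0$ (indeed $\int_0^{T}h(s,k)\,ds\to 1$). In fact the one-sided derivative is $\partial_k^{+}F(T,a,0)=-\phi(T)\neq 0$. The formula for $\partial_k\C(1,1,T)$ is nevertheless true, because in $\C(1,e^k,T)=2p\big(F(T,\sgp/2,k/\sgp)-e^kF(T,-\sgp/2,k/\sgp)\big)$ the two nonzero contributions enter through the same factor $e^{k/2}h(s,k/\sgp)$ (since $e^{k}e^{-k/2}=e^{k/2}$) and cancel identically; a careful proof must argue via this cancellation rather than via term-by-term vanishing (the paper's ``observe that $\partial_k\psi(a,s,k)=0$'' suffers from the same defect). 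Second, your asymptotic bookkeeping delivers a remainder $O(\sqrt T)$ in~\eqref{skewATM}, not the stated $o(\sqrt T)$: even granting your (correct) refinement $F(T,\sgp/2,0)+F(T,-\sgp/2,0)=1+O(T)$, the factor $e^{\sigmaBS^2(T,0)T/8}=1+\Theta(T)$ alone produces, after multiplication by $\sqrt{\pi}/\sqrt{2T}$, a correction of exact order $\sqrt T$; carrying the expansions one order further shows the coefficient of $\sqrt T$ is proportional to $\sgm\sgp(\sgp-\sgm)^3/(\sgm+\sgp)^3$, hence nonzero whenever $\sgp\neq\sgm$. So the error term can only be $O(\sqrt T)$ --- an imprecision already present in the theorem's statement and in the paper's ``we skip details'' --- and your last sentence should claim~\eqref{skewATM} with $O(\sqrt T)$ rather than $o(\sqrt T)$.
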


\begin{remark}
{\rm 
Equation~\eqref{sigmaBS(T,0)}  is similar to the equation  for the ATM skew obtained in
Theorem 3.5 in~\cite{Pigato}. In particular, the factor 
$\frac{\sqrt{\pi}}{\sqrt{2T}}e^{\frac{1}{8}\sigmaBS^2(T,0)T}$
is exactly the same as the one in that theorem. However, the term
$F(T,-\sgp/2,0)+F(T,\sgp/2,0)$ differs from the similar term 
in~\cite{Pigato}. This difference  is caused by the same reason as the  
difference in the pricing formulas. 
Note that the short term asymptotic behaviour of the ATM skew given by~\eqref{skewATM} 
is exactly the same as in~\cite{Pigato} (e.g. see Remark 3.2 in that paper).
}
\end{remark}

Before proceeding to  the proof of Theorem~\ref{ATM_Skew}, 
we prove below  two auxiliary statements.
The first one is Proposition~\ref{F-asymp}  
that provides an asymptotic result for the function 
$F$ defined in~\eqref{F}.
The second auxiliary statement is 
Proposition~\ref{log-strike-derivative} that 
concerns  the derivative of the call price  with respect to the  log-strike.

\begin{proposition}
\label{F-asymp}
For any fixed $a\in \R$ the following holds
\begin{equation*}
F(T,a,0)=\frac{a}{\sqrt{2\pi}}\sqrt{T}+\frac{1}{2}+o\big(\sqrt{T}\big),
 \text{ as } T\to 0.
\end{equation*}
\end{proposition}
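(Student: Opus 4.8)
The plan is to read off $F(T,a,0)$ from \eqref{F} as the convolution $\int_0^T\phi(T-s)\,g(s)\,ds$ and to extract its behaviour as $T\to0$ by rescaling. Specializing \eqref{psi} at $k=0$ and using $\tfrac{as}{\sqrt s}=a\sqrt s$, the integrand is
$$
g(s):=\psi(a,s,0)\,e^{-\frac{a^2}{2}s}
=\frac{1}{\sqrt{2\pi s}}\,e^{-\frac{a^2}{2}s}+a\,\Nc\big(a\sqrt s\big).
$$
First I would record how the two factors behave near the origin. Expanding the two exponentials and the two normal cdf's in \eqref{phi} and using $\Nc(z)=\tfrac12+z/\sqrt{2\pi}+O(z^3)$, the singular $(\sgp-\sgm)^{-1}$ prefactors cancel at leading order and one obtains
$$
\phi(t)=\frac{1}{\sqrt{2\pi t}}+O\big(\sqrt t\big),\qquad
g(s)=\frac{1}{\sqrt{2\pi s}}+\frac a2+O\big(\sqrt s\big),\qquad\text{as }t,s\to0.
$$
In particular the leading singularity of $\phi$ is the universal $(2\pi t)^{-1/2}$ already visible in the remark after Theorem~\ref{L0}, and does not depend on $\sgp,\sgm$; note also that $\phi$ has no $O(1)$ term, so the only order-$\sqrt T$ contribution will come from the constant $\tfrac a2$ of $g$.

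Next I would rescale the convolution. The substitution $s=Tu$, $u\in[0,1]$, gives
$$
F(T,a,0)=T\int_0^1\phi\big(T(1-u)\big)\,g(Tu)\,du .
$$
Substituting the two expansions, the product of the leading singular terms yields the $T$-independent integrand $\tfrac{1}{2\pi}\,(u(1-u))^{-1/2}$, and
$$
\int_0^1\frac{du}{2\pi\sqrt{u(1-u)}}=\frac{1}{2\pi}\int_0^1\frac{du}{\sqrt{u(1-u)}}=\frac{1}{2\pi}\cdot\pi=\frac12 ,
$$
while the product of the leading singularity of $\phi$ with the constant $\tfrac a2$ of $g$ yields $\tfrac{a\sqrt T}{2\sqrt{2\pi}}\,(1-u)^{-1/2}$, whose integral over $[0,1]$ equals $\tfrac{a}{\sqrt{2\pi}}\sqrt T$. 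These are exactly the two terms in the claimed asymptotics.

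The remaining, and main, task is to show that all other contributions are $o(\sqrt T)$, i.e. to justify replacing $\phi$ and $g$ by their expansions under the integral sign; this is the \emph{delicate} point, because the convolution of the two $(\,\cdot\,)^{-1/2}$ singularities produces the integrable endpoint singularities $u^{-1/2}$ and $(1-u)^{-1/2}$, so one cannot pass to the limit termwise without care. To control this I would write $\phi(T(1-u))=(2\pi T(1-u))^{-1/2}+\eps_1$ and $g(Tu)=(2\pi Tu)^{-1/2}+\tfrac a2+\eps_2$, and use that, since $T(1-u)\le T$ and $Tu\le T$, the $O(\sqrt{\cdot}\,)$ remainders from the expansions above are uniform in $u$: $|\eps_1|\le C\sqrt T$ and $|\eps_2|\le C\sqrt T$ for all $u\in[0,1]$ once $T$ is small enough (the remainders even vanish at the endpoints $u=1$ and $u=0$). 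Multiplying out $T\,\phi(T(1-u))\,g(Tu)$, every cross term containing at least one $\eps_i$ is then bounded by $C\,T\big(u^{-1/2}+(1-u)^{-1/2}+1\big)$, which is integrable on $[0,1]$ and contributes $O(T)=o(\sqrt T)$; equivalently one may invoke dominated convergence with majorant $C\big(u^{-1/2}+(1-u)^{-1/2}\big)$ after factoring out the explicit powers of $T$. This isolates the two leading products and completes the proof. The only genuine obstacle is this endpoint analysis; the algebraic expansions of $\phi$ and $g$ are routine Taylor computations.
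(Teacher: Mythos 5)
Your proposal is correct and follows essentially the same route as the paper's proof: expand $\phi(t)=\frac{1}{\sqrt{2\pi t}}+O(\sqrt{t})$ and $\psi(a,s,0)\,e^{-\frac{a^2}{2}s}=\frac{1}{\sqrt{2\pi s}}+\frac{a}{2}+O(\sqrt{s})$, substitute into the convolution \eqref{F}, and evaluate the two elementary integrals $\int_0^T\frac{ds}{\sqrt{s(T-s)}}=\pi$ and $\int_0^T\frac{ds}{\sqrt{T-s}}=2\sqrt{T}$ to produce the $\frac12$ and the $\frac{a}{\sqrt{2\pi}}\sqrt{T}$ terms. The only difference is presentational: your rescaling $s=Tu$ and the explicit uniform bounds $|\eps_i|\le C\sqrt{T}$ make rigorous the error control that the paper asserts implicitly when it writes the remainder as $o(\sqrt{T})$.
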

\begin{proof}
Observe  that 
\begin{align*}
\psi(a,t,0)&=\frac{1}{\sqrt{2\pi t}}+\frac{a}{2}+O\big(\sqrt{t}\big),
 \text{ as } t\to 0,\\
\phi(t)&=\frac{1}{\sqrt{2\pi t}}+\bigg(\frac{\sgp+\sgm}{8\sqrt{2\pi}}-\frac{\sgp\sgm}{4}\bigg)\sqrt{t}
+o\big(\sqrt{t}\big), \text{ as } t\to 0.
\end{align*}
Therefore,
\begin{align*}
F\left(T,a,0\right)&=
\frac{1}{\sqrt{2\pi}}\int_0^T\frac{e^{-\frac{a^2}{2}s}}{\sqrt{T-s}}
\bigg(\frac{a}{2}
+\frac{1}{\sqrt{2\pi s}}\bigg)ds+o\big(\sqrt{T}\big)\\
&=\frac{a}{2\sqrt{2\pi}}\int_0^T\frac{1}{\sqrt{T-s}}ds+\frac{1}{2\pi}\int_0^T
\frac{1}{\sqrt{s(T-s)}}ds+o\big(\sqrt{T}\big)\\
&
=\frac{a}{\sqrt{2\pi}}\sqrt{T}+\frac{1}{2}+o\big(\sqrt{T}\big), \text{ as } T\to 0.
\end{align*}
The proof of the proposition is finished.
\end{proof}
\begin{proposition}
\label{log-strike-derivative}
We have that 
\begin{equation}
\label{Ck}
\partial_k\C(1,e^{k},T)=
-2\frac{\sgm}{\sgm+\sgp}F\big(T,-\sgp/2,k/\sgp\big)e^{k},
\end{equation}
and
\begin{equation}
\label{Ck0}
\begin{split}
\partial_k\C(1,1,T)&:=\partial_k\C\big(1,e^{k},T\big)|_{k=0}=-2\frac{\sgm}{\sgm+\sgp}
F\big(T,-\sgp/2,0\big)\\
&
=-\frac{\sgm}{\sgm+\sgp} +
\frac{1}{\sqrt{2\pi}}\frac{\sgm\sgp}{\sgm+\sgp}\sqrt{T}+o\big(\sqrt{T}\big), 
\text{ as } T\to 0.
\end{split}
\end{equation}
\end{proposition}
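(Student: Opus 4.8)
The plan is to differentiate the closed-form call price from Theorem~\ref{Call-Put} directly in the log-strike and to watch for a cancellation. First I would substitute $K=e^{k}$ into that formula, so that the variable denoted $k=\log K/\sgp$ there becomes $\kappa:=k/\sgp$ and the prefactor $e^{\sgp k}$ becomes $e^{k}$. Writing $2p=\frac{2\sgm}{\sgm+\sgp}$, this gives
\begin{equation*}
\C(1,e^{k},T)=2p\bigg(F\Big(T,\frac{\sgp}{2},\kappa\Big)-e^{k}F\Big(T,-\frac{\sgp}{2},\kappa\Big)\bigg),
\end{equation*}
which is the object to be differentiated. Since~\eqref{Ck} concerns $K>1$, I work in the range $\kappa\geq 0$.

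The key computation is the partial derivative of $F$ in its third argument. I would use the integral representation $\psi(a,s,\kappa)=\int_{\kappa}^{\infty}e^{ax}h(s,x)\,dx$ from~\eqref{psi} (valid for $\kappa\geq 0$), so that by the fundamental theorem of calculus $\partial_{\kappa}\psi(a,s,\kappa)=-e^{a\kappa}h(s,\kappa)$. Differentiating under the integral sign in the definition~\eqref{F} of $F$ then yields
\begin{equation*}
\partial_{\kappa}F(T,a,\kappa)=-e^{a\kappa}\int_{0}^{T}\phi(T-s)h(s,\kappa)e^{-\frac{a^2}{2}s}\,ds.
\end{equation*}
Setting $a=\pm\sgp/2$, so that $a^2/2=\sgp^2/8$ in both cases, shows that $\partial_{\kappa}F(T,\sgp/2,\kappa)$ and $\partial_{\kappa}F(T,-\sgp/2,\kappa)$ share the common integral factor $J:=\int_{0}^{T}\phi(T-s)h(s,\kappa)e^{-\frac{\sgp^2}{8}s}\,ds$, differing only through the prefactors $e^{\pm\frac{\sgp}{2}\kappa}$.

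Next I would apply the chain rule, noting that the inner variable $\kappa=k/\sgp$ contributes a factor $1/\sgp$, together with the product rule on the $e^{k}F(\cdot)$ term. Using $\frac{\sgp}{2}\kappa=k/2$, the $J$-contribution from differentiating the first $F$ is $-\frac{e^{k/2}}{\sgp}J$, while the $J$-contribution from the second term, after accounting for the explicit minus sign and the factor $e^{k}e^{-k/2}=e^{k/2}$, is $+\frac{e^{k/2}}{\sgp}J$; these cancel exactly. What survives is only the term in which the product rule differentiates the explicit factor $e^{k}$, leaving $\partial_k\C(1,e^{k},T)=-2p\,e^{k}F(T,-\sgp/2,k/\sgp)$, which is precisely~\eqref{Ck}.

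Finally, evaluating at $k=0$ gives $\partial_k\C(1,1,T)=-2p\,F(T,-\sgp/2,0)$, and substituting the expansion from Proposition~\ref{F-asymp} with $a=-\sgp/2$, namely $F(T,-\sgp/2,0)=\frac{1}{2}-\frac{\sgp}{2\sqrt{2\pi}}\sqrt{T}+o(\sqrt{T})$, together with $2p=\frac{2\sgm}{\sgm+\sgp}$, yields the short-time expansion in~\eqref{Ck0} after routine arithmetic. The only genuine subtlety, beyond the standard justification of differentiation under the integral sign, is spotting the cancellation of the two $J$-terms in the previous step; once the common factor $J$ is identified, everything else is bookkeeping.
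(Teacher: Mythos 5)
Your proposal is correct and follows essentially the same route as the paper: differentiate the pricing formula of Theorem~\ref{Call-Put} in the log-strike, observe that the contributions coming from the $k$-dependence of the two $F$-terms cancel so that only the product-rule term on the explicit factor $e^{k}$ survives, and then invoke Proposition~\ref{F-asymp} with $a=-\sgp/2$. If anything, your write-up is more careful than the paper's: the paper justifies the cancellation by the assertion that $\partial_k\psi(a,s,k)=0$ for all $a,s,k$, which is false as literally stated (one has $\partial_k\psi(a,s,k)=-e^{ak}h(s,k)$ for $k\geq 0$), and your explicit identification of the common factor $J$ and the exact cancellation of the two $\pm e^{k/2}J/\sgp$ contributions is precisely the rigorous content behind that terse claim.
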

\begin{proof}
By Theorem~\ref{Call-Put} we have that 
$$
\C\big(1, e^{k}, T\big)=2p\Big(F\big(T,\sgp/2, k/\sgp\big)-e^{k}
F\big(T, -\sgp/2, k/\sgp\big)\Big),
$$
where
$$F\big(T, a, k/\sgp\big)=\int_0^T\phi(T-s)\psi\big(a,s, k/\sgp\big)
e^{-\frac{a^2}{2}s}ds,$$
and functions $\phi$ and $\psi$ are given by~\eqref{phi} and~\eqref{psi} respectively.
Further, observe that 
$\partial_k\psi(a,s,k)=0$ for all $a, s$ and $k$.
Therefore,
\begin{align*}
 \partial_k\C\big(1,e^{k},T\big)
&=-2p e^{k}\int_0^T\phi(T-s)\psi\big(-\sgp/2,s, k/\sgp\big)e^{-\frac{\sgp^2}{8}s}ds\\
&=
-2\frac{\sgm}{\sgm+\sgp}F\big(T,-\sgp/2,k/\sgp\big)e^{k},
\end{align*}
as claimed in~\eqref{Ck}, and, hence, 
we get that
$$\partial_k\C(1,1,T)=-2\frac{\sgm}{\sgm+\sgp}F\big(T,-\sgp/2,0\big),$$
i.e. the first equation in~\eqref{Ck0}. 
Applying Proposition~\ref{F-asymp} with $a=-\frac{\sgp}{2}$ gives 
the second equation in~\eqref{Ck0},  the short term asymptotics of 
$\partial_k\C(1,1,T)$, as claimed.
\end{proof}

\begin{proof}[Proof of Theorem~\ref{ATM_Skew}]
Similarly to $\partial_k=\frac{\partial }{\partial k}$,
denote $\partial_{\sigma}=\frac{\partial }{\partial \sigma}$.
By the chain rule, we have that 
\begin{equation}
\label{sigmaBS_k}
\partial_k\sigmaBS(T,k)=
\frac{\partial_k\C\big(1,e^k,T\big)-\partial_k\CBS\big(\sigmaBS(T,k), 1, e^k,T\big)}
{\partial_{\sigma}\CBS\big(\sigmaBS(T,k),1, e^k,T\big)}.
\end{equation}
Further, observe that 
\begin{align}
\label{Vega}
\partial_{\sigma}\CBS(\sigmaBS(T,0),1,1,T)&=\frac{\sqrt{T}}{\sqrt{2\pi}}
e^{-\frac{1}{8}\sigmaBS^2(T,0)T},
\\
\label{Ck(0)}
\partial_k\CBS(\sigmaBS(T,0),1,1,T)&=
\frac{1}{2}\bigg(-1+{\rm Erf}\Big(\frac{\sqrt{T} \sigmaBS(T,0)}{2 \sqrt{2}}\Big)\bigg),
\end{align}
and 
\begin{equation}
\label{BS=atm}
{\rm Erf}\Big(\frac{\sqrt{T}\sigmaBS(T,0)}{2 \sqrt{2}}\Big)=
\CBS(\sigmaBS(T,0),T)=\Vatm(T).
\end{equation}
Using~\eqref{Vega},~\eqref{Ck(0)} and~\eqref{BS=atm} we can rewrite~\eqref{sigmaBS_k} in the case $k=0$ as follows
\begin{equation}
\label{skewT}
\partial_k\sigmaBS(T,0)=\frac{\sqrt{2 \pi} }{\sqrt{T}}
e^{\frac{1}{8}\sigmaBS^2(0,T)T} 
\bigg(\partial_k\C(1,1,T)+\frac{1}{2}-\frac{\Vatm(T)}{2}\bigg)
\end{equation}
Next, by equation~\eqref{call-put-atm},
$$\Vatm(T)=
\frac{2\sgm}{\sgm+\sgp}
\Big(F\big(T,\sgp/2,0\big)-F\big(T,-\sgp/2,0\big)\Big)
$$
and, by Proposition~\ref{log-strike-derivative}, 
$$\partial_k\C(1,1,T)=-\frac{2\sgm}{\sgm+\sgp}F\big(T,-\sgp/2,0\big).$$
Therefore,
$$
\partial_k\C(1,1,T)+\frac{1}{2}-\frac{\Vatm(T)}{2}
=\frac{1}{2}-\frac{\sgm}{\sgm+\sgp}\Big(
F\big(T,-\sgp/2,0\big)+F\big(T,\sgp/2,0\big)\Big)
$$
and, getting back to~\eqref{skewT}, we obtain, after simple algebra, that 
$\sigmaBS(T,0)$ is equal to~\eqref{sigmaBS(T,0)},
as claimed.

Finally, equation~\eqref{skewATM}
follows from~\eqref{sigmaBS(T,0)} and Proposition~\ref{F-asymp} (we skip details).

\end{proof}

\section{Proofs of Theorems~\ref{L0}--\ref{BS-approx}
and Corollary~\ref{ATM-price}}
\label{A}


\subsection{Proof of Theorem~\ref{L0}}

By equation~\eqref{f-v-u-s-1} we have  that
\begin{equation}
\label{density}
p(0,x,T)
=2\alpha(x)e^{-\frac{\sigma(x)x}{2}}Q(x,T),
\end{equation}
where 
$$Q(x,T)=\int\limits_{u+v+s=T}
\Big(\int_0^{\infty}h(v, \ell p)h(u, \ell q)d\ell\Big)h(s, x)
e^{-\frac{\sgp^2v+\sgm^2u+\sigma^2(x)s}{8}}dsdv.
$$
Recall two equations 
 that were used  in the proof of~\cite[Theorem 3, Part 1)]{GS}, namely
\begin{equation*}
\int\limits_0^{\infty }h(v,\ell p)h(u,\ell q)dl=\frac{pq}{2\sqrt{2 \pi } 
\left(p^2u+q^2v\right)^{3/2}},
\end{equation*}
and 
\begin{equation*}
\int\limits_0^w
\frac{pq}{\sqrt{2 \pi }\left(p^2(w-v)+q^2v\right)^{3/2}}e^{-\frac{\sigma_{\rm p}^2v+\sigma^2_{\rm m}(w-v)}{8}}dv
=\phi(w) \text{ for } w>0,
\end{equation*}
where the function $\phi$ is defined in~\eqref{phi}.
Using these equations in~\eqref{density} gives  the claimed equation for the  density 
of $X_T$.

\subsection{Proof of Theorem~\ref{Call-Put}}

Since $S_0=1$ and $K>1$, we have that  $X_0=\log S_0/\sgp=0$
and $k=\log K/\sgp\geq 0$ respectively. By Theorem~\ref{L0} we have that 
\begin{align*}
\C(1, K, T)&=\int_{k}^{\infty}\Big(e^{\sgp x}-e^{\sgp k}\Big)p(0,x,T)dx\\
&=2p\int_{k}^{\infty}\Big(e^{\sgp x}-e^{\sgp k}\Big)
e^{-\frac{\sgp}{2}x}\Big(\int_0^T\phi(T-s)h(s,x)e^{-\frac{\sgp^2}{8}s}ds\Big)dx\\
&=2p\int_0^T\phi(T-s)e^{-\frac{\sgp^2}{2}s}\bigg(\int_{k}^{\infty}
\Big(e^{\frac{\sgp}{2}x} - e^{\sgp k}e^{-\frac{\sgp}{2}x}\Big)
h(s,x)dx\bigg)ds\\
&=2p\int_0^T\phi(T-s)e^{-\frac{\sgp^2}{8}s}\Big(
\psi\big(\sgp/2,s,k\big)
-e^{\sgp k}\psi\big(-\sgp/2,s,k\big)\Big)ds\\
&=2p\Big(F\big(T, \sgp/2, k\big)-e^{\sgp k}
F\big(T, -\sgp/2, k\big)\Big),
\end{align*}
as claimed.

Equation for the put price $\Put(1, K, T)$ can be obtained 
similarly.

\subsection{Proof of Corollary~\ref{ATM-price}}
Note  first  that 
\begin{align}
\label{psi(k=0)}
\psi(a,s,0)-\psi(-a,s,0)&=ae^{\frac{a^2}{2}s} \text{ for all }  a.
\end{align}
Combining~\eqref{psi(k=0)} with~\eqref{call-put-atm} we obtain  that 
\begin{align*}
\Vatm(T)&=\C(1,1,T)\\
&=2p\int\limits_0^T\phi(T-s)e^{-\frac{\sgp^2}{8}s}
\Big(\psi\big(\sgp/2,s,0\big)-
\psi\big(-\sgp/2,s,0\big)\Big)ds\\
&=p\sgp\int\limits_0^T\phi(T-s)ds=p\sgp\int\limits_0^T\phi(s)ds
=\frac{\sgm\sgp}{\sgm+\sgp}\int\limits_0^T\phi(s)ds.
\end{align*}
Further, a direct computation gives  that
\begin{equation*}
\begin{split}
\int_0^T\phi(T-s)ds&=
\frac{\sqrt{T}}{\sqrt{2\pi}}\frac{1}{(\sgm-\sgp)}\Big(\sgm e^{-\frac{\sgp^2}{8}T}-
\sgp e^{-\frac{\sgm^2}{8}T}\Big)\\
&+\frac{\sgm(4+\sgp^2T)}{4\sgp(\sgm-\sgp)}{\rm Erf}
\bigg(\sgp\frac{\sqrt{T}}{\sqrt{8}}\bigg)
-\frac{\sgp(4+\sgm^2T)}{4\sgm(\sgm-\sgp)}{\rm Erf}
\bigg(\sgm\frac{\sqrt{T}}{\sqrt{8}}\bigg).
\end{split}
\end{equation*}
Combining  the above and simplifying gives the ATM price, as claimed.

\subsection{Proof of Theorem~\ref{Dupire_rep}}

Recall that in the two-valued LVM
$\sigma(t, K)=\sgp{\bf 1}_{\{K\geq 1\}}+\sgm{\bf 1}_{\{K<1\}}$.
If  $S_0=1$, then  equation~\eqref{Dup1} becomes 
\begin{align*}
\C\left(1,K,T\right)
&=\frac{K^2\sgp^2}{2}\int\limits_0^Tp_S\left(S_0,K,t\right)dt \text{ for } K>1.
\end{align*}
 Noting that 
$$p_S\left(1,K,t\right)=\frac{p(0,\log K/\sgp,t)}{K\sgp}$$ 
for $K>1$, where $p(0,\cdot,t)$ is the density of $X_t$ given that $X_0=0$
(see Lemma~\ref{L0}), we get   that
\begin{align*}
\C(1,K, T)&=\frac{K\sgp}{2}\int\limits_0^T
p(0,\log K/\sgp,t)dt\\
&=\sqrt{K}p\sgp\int\limits_0^T\int\limits_0^t\phi(t-s)h\big(s,\log K/\sgp\big)
e^{-\frac{\sgp^2}{8}s}dsdt\\
&=\sqrt{K}
\int\limits_0^Th\big(s,\log K/\sgp\big)e^{-\frac{\sgp^2}{8}s}
\Big(p\sgp\int_0^{T-s}\phi(t)dt\Big)ds\\
&=\sqrt{K}\int\limits_0^Th\big(s,\log K/\sgp\big)e^{-\frac{\sgp^2}{8}s}\Vatm(T-s)ds
\text{ for }  K>1,
\end{align*}
where $\Vatm(T-s)$ is the ATM price (Theorem~\ref{ATM-price}) for maturity $T-s$.

\subsection{Proof of Theorem~\ref{BS-approx}}
\label{BS-proof}

We obtain the  BS approximation only for the  call option price $\C(1, K, T)$, as the case of the put option is similar.

Using equation~\eqref{density} for the density of $X_T$ 
gives that
$$\C(1, K, T)=2p\int\limits_k^{\infty}\Big(e^{\sgp x}-e^{\sgp k}\Big)e^{-\frac{\sgp}{2}x}
A(x,\sgp,\sgm)dx,$$
where 
\begin{align*}
A(x,\sgp,\sgm)&=
\int\limits_0^{\infty}\int\limits_0^T\int\limits_{0}^{T-s}h(v,\ell p)
h(T-v-s,\ell q)h(s,x)
e^{-\frac{\sgp^2v+\sgm^2(T-v-s)+\sgp^2 s}{8}}dvdsd\ell.
\end{align*}
For the BS price of  the same call option in the case when  volatility is equal to 
$\sgp$   we have that 
$$
\CBS(\sgp, 1, K, T)
=\int\limits_k^{\infty}\Big(e^{\sgp x}-e^{\sgp k}\Big)e^{-\frac{\sgp}{2}x}
A(x,\sgp,\sgp)dx
$$
Denote $u=T-v-s$ and observe that 
\begin{align*}
\Big|e^{-\frac{\sgp^2v+\sgm^2 u+\sgp^2 s}{8}} - e^{-\frac{\sgp^2T}{8}}\Big|&=
\Big|1-e^{-\frac{1}{8}(\sgm^2-\sgp^2)u}\Big|e^{-\frac{\sgp^2T}{8}}.
\end{align*}
Therefore, 
\begin{align*}
\Big|e^{-\frac{\sgp^2v+\sgm^2 u+\sgp^2 s}{8}} - e^{-\frac{\sgp^2T}{8}}\Big|
&
\leq \frac{|\sgm^2-\sgp^2|T}{8}e^{-\frac{\sgp^2}{8}T}+o(T), 
\text{ as }T\to 0.
\end{align*}
This gives that 
$$\bigg|\int\limits_k^{\infty}\Big(e^{\sgp x}-e^{\sgp k}\Big)
e^{-\frac{\sgp}{2}x}
\big(A(x,\sgp,\sgm)-A(x,\sgp, \sgp)\big)dx\bigg|\leq C_1T$$
for some constant $C_1$  and all sufficiently small $T$, which in turn 
 implies that
$$|\C(1,K,T)-2p\CBS(\sgp, 1, K, T)|\leq cT$$
for  some constant $c$ and all sufficiently small $T$.

\section{BS approximation revisited}
\label{BS-study}

In this section we  revisit  the BS approximation. In particular,
we obtain another  form of the approximation. The new approximation 
corrects a  certain drawback of the original one (to be explained).
Then we apply the modified BS approximation for estimation 
of implied volatility in the two-valued LVM.

Start with noting that
\begin{align}
\label{eqn:newApprox-C0}
\C(1,K,T)&\approx 2p\CBS(\sgp,1,K,T)
\text{ for } K>1\\
\label{eqn:newApprox-P0}
\Put(1,K,T)&\approx 2q\PBS(\sgm,1,K,T)
\text{ for } K<1.
\end{align}
In the  ATM case $K=1$ the equation for the call gives 
that $\Vatm(T)\approx 2p\BS(\sgp, T)$, while the equation for 
the put gives  that $\Vatm(T)\approx 2q\BS(\sgm, T)$. 
Below we obtain another form of the BS approximation in which this discrepancy is eliminated.

Using Theorem~\ref{Dupire_rep} 
we obtain that 
\begin{equation*}
\begin{split}
  \C(1,K,T)&=\sqrt{K}\int\limits_0^T\Vatm(T-s)h(s,\log K/\sgp)
e^{-\frac{1}{8} \sgp^2 s}ds\\
&=\sqrt{K} \int\limits_0^T\frac{\Vatm(T-s)}{\BS(\sgp, T-s)}
\BS(\sgp, T-s)h\big(s,\log K/\sgp\big)
e^{-\frac{1}{8} \sgp^2 s}ds,
\end{split}
\end{equation*}
where $\BS(\sgp, \cdot)$ is  the BS  ATM price (defined in~\eqref{BS-ATM}).
Approximating the time dependent ratio $\frac{\Vatm(T-s)}{\BS(\sgp, T-s)}$
by $\frac{\Vatm(T)}{\BS(\sgp,T)}$, i.e. 
by its value  at a single time moment $T$, gives 
  the new BS approximation for the price of the call option
\begin{equation}
\label{eqn:newApprox-C}
\begin{split}
 \C(1,K,T)&\approx\frac{\Vatm(T)\sqrt{K}}{\BS(\sgp,T)}
 \int\limits_0^T\BS(\sgp, T-s)h(s,\log K/\sgp)
e^{-\frac{1}{8} \sgp^2 s}ds\\
&=\frac{\Vatm(T)}{\BS(\sgp,T)}\CBS(\sgp,1,K,T)
\text{ for }  K>1.
\end{split}
\end{equation} 
Similarly,  we obtain the new approximation for the price of the put option, namely
\begin{align*}
\Put(1,K,T)&\approx \frac{\Vatm(T)}{\BS(\sgm,T)}\PBS(\sgm, 1,K,T)
\text{ for } K<1.
\end{align*}
In addition, note that the new BS approximation implies the approximation  given by~\eqref{eqn:newApprox-C0} and~\eqref{eqn:newApprox-P0}. For example, 
in the case of the call option we have that 
\begin{equation*}
\BS(\sgp, T)\approx \frac{\sgp\sqrt{T}}{\sqrt{2\pi}}\quad\text{and}\quad
\Vatm(T)\approx\frac{\sigatm\sqrt{T}}{\sqrt{2\pi}}=
2p\frac{\sgp\sqrt{T}}{\sqrt{2\pi}},
\end{equation*}
so that  $\frac{\Vatm(T)}{\BS(\sgp,T)}\approx 2p,$ which 
gives~\eqref{eqn:newApprox-C0}, as claimed.

Below we provide results of a numerical  experiment in which the new BS
approximation was used  to estimation  implied volatility.
In the experiment implied volatility (considered as a function 
of moneyness \(\frac{\log\left(K/S\right)}{\sigma (K,T)\sqrt{T}}\)) 
is estimated in a two-valued LVM with parameters $\sgp=0.2$ and 
$\sgm=0.9$.  
Figure~\ref{fig1} 
shows the  implied volatility smile for maturity $T=5$ years (Y)
obtained by numerical integration of the option pricing formulas in 
Theorem~\ref{Call-Put}
  and its approximation obtained by using equation~\eqref{eqn:newApprox-C}.

Figure~\ref{fig2} 
shows the difference between implied volatility  and 
its approximation for maturities $0.1Y$, $0.5Y$ and $5Y$.

\begin{figure}[htbp]
\centering
    \includegraphics[width=9cm, height=6.5cm]{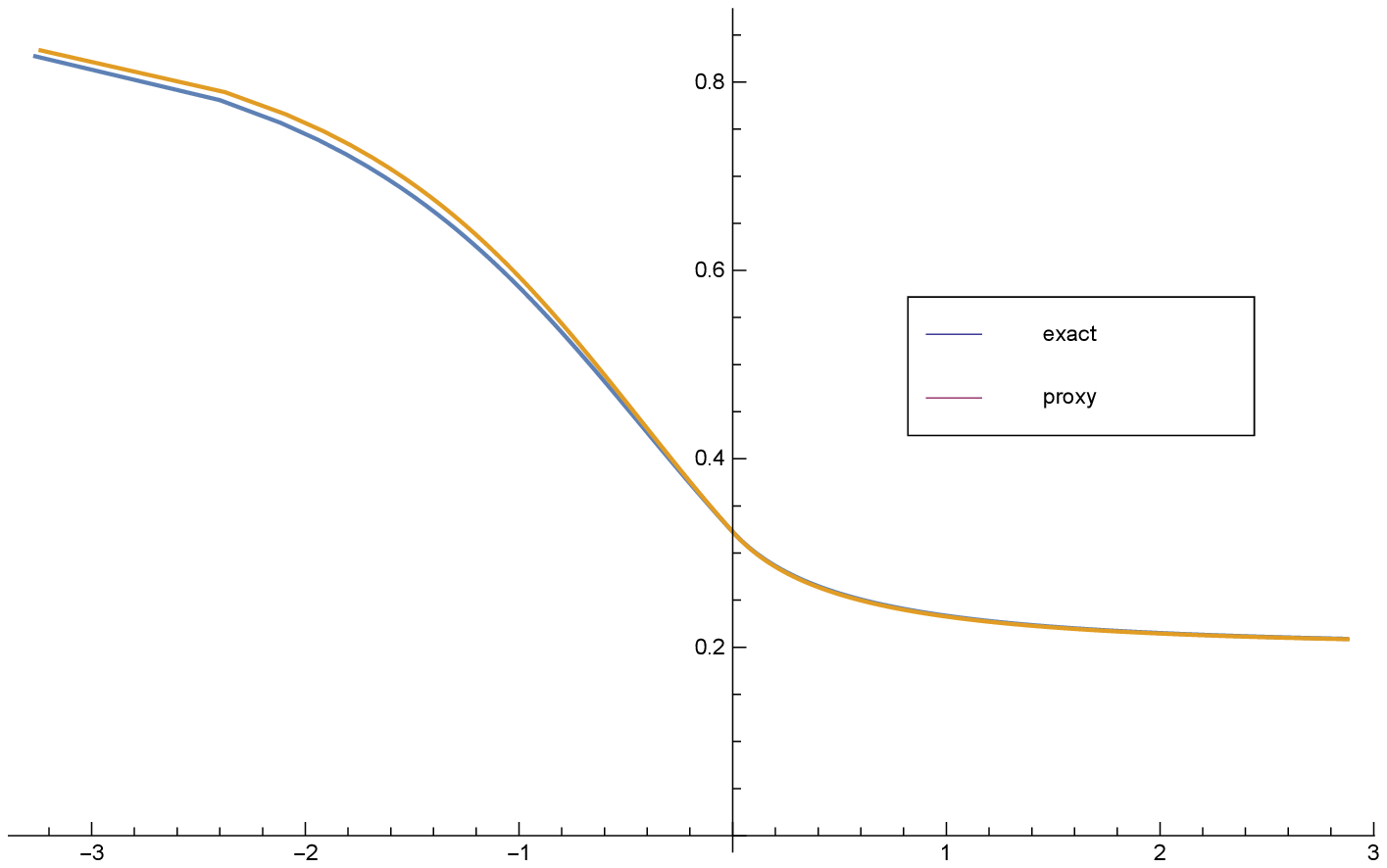}
\caption{{\footnotesize Implied volatility smile 
 and its approximation for maturity $5Y$}}
\label{fig1}
\end{figure}

\begin{figure}[htbp]
\centering
    \includegraphics[width=9cm, height=6cm]{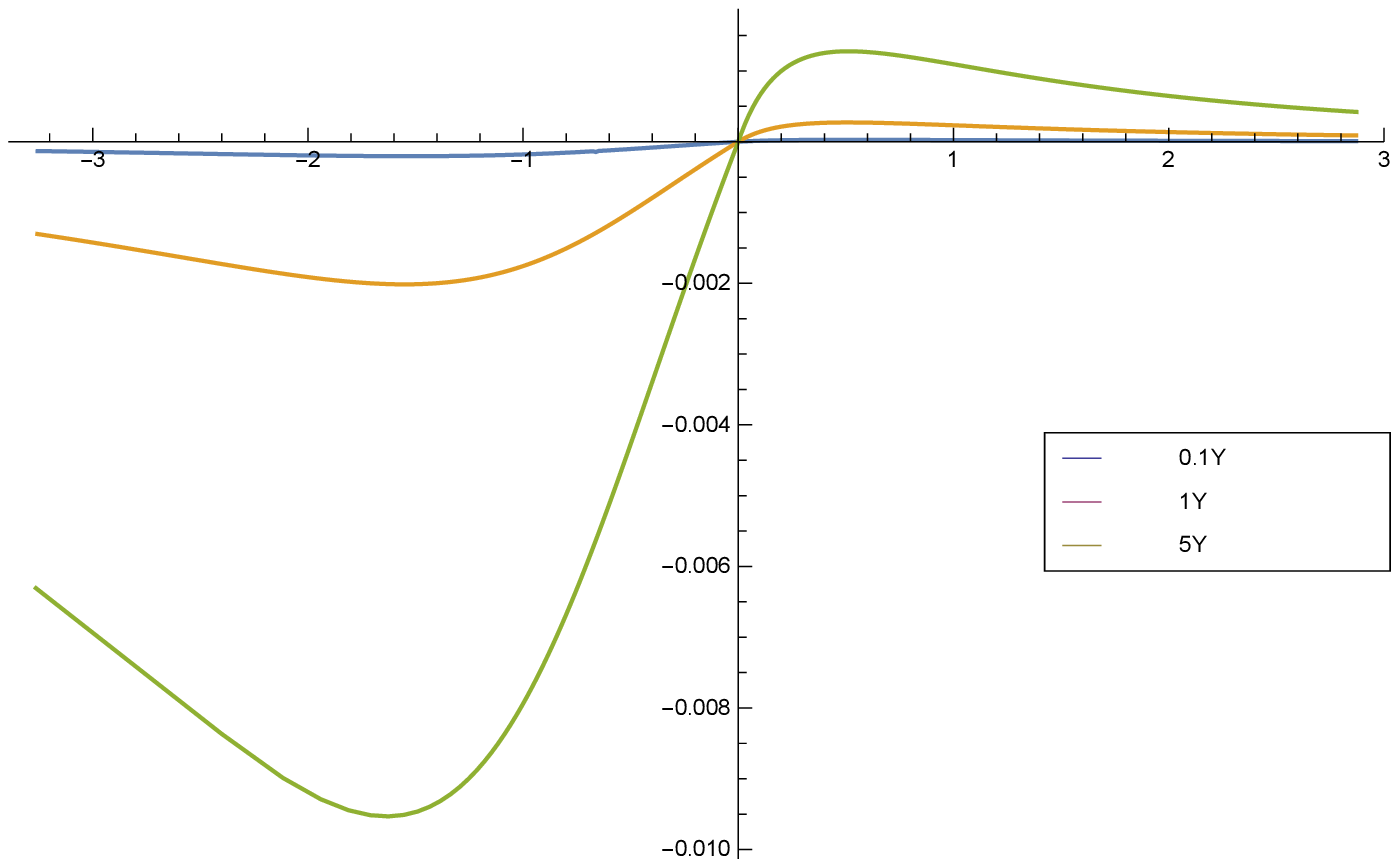}
\caption{{\footnotesize Plots of differences between implied volatility and 
its approximation for  maturities $0.1Y$, $0.5Y$ and $5Y$.
}}
\label{fig2}
\end{figure}

In addition,  it should be 
 noted that the new form of the BS approximation can be used 
  for the arbitrage free parametrization of implied volatility with given skew and constrained volatilities on wings (we do not discuss this here).

\section{Conclusion}

 In this  paper we consider  the  LVM  in which  volatility
takes two possible values. 
A particular value depends on the position of the underlying price with respect to a given threshold.
The model is well known, and a number of results have been obtained for the model in recent years. In particular, explicit pricing formulas for European options 
have been obtained  in Pigato~\cite{Pigato} in the case when the threshold
is taken at the money. These formulas have then been  used to establish 
 that the skew  explodes as  $T^{-1/2}$, as  maturity $T\to 0$,
 which   reproduces the  power law behaviour of the skew observed in some 
  real data.  The research method  in Pigato~\cite{Pigato} is based on the Laplace transform and related techniques.
  
In the present paper we propose another approach to the study 
of the two-valued LVM.  
Our approach  is based on the natural relationship of the two-valued 
LVM  with SBM and consists of using the joint
distribution of SBM and some of its functionals (\cite{GS}).  
We use our distributional results for 
obtaining   new  option pricing formulas and 
approximation of option prices  in terms of  the corresponding BS prices. 
The BS approximation is a key 
ingredient of our analysis  of implied volatility and the skew.
 Using this approximation allows to  obtain the aforementioned  behaviour
  of the implied volatility surface  by rather 
   elementary methods (e.g. Taylor's expansion of the second order).
 In addition, we show that the BS approximation can be improved
 and used to  estimate  implied volatility.

\section{Appendix. SBM with two-valued drift}
\label{B}

The process $X_t$ defined in~\eqref{X_t} is a special case of the  process 
$Z_t=(Z_t,\, t\geq 0)$ defined  as a strong solution of the  equation
\begin{equation}
\label{SBM_m}
dZ_t=m(Z_t)dt+(p-q)dL^{(Z)}_t+dW_t,
\end{equation}
where 
$m(z)=m_1{\bf 1}_{\{z\geq 0\}}+m_2{\bf 1}_{\{z<0\}}$,
$p\geq 0$ and $q\geq 0$ are given constants, such that $p+q=1$, 
$L_t^{(Z)}$ is  the local time of the process $Z_t$  at the origin (defined similarly to~\eqref{L}).
The existence and uniqueness of the strong solution of the equation is well known
(e.g. see Lejay~\cite{Lejay} and references therein).
In the special case $m_1=m_2=0$, the  process  $Z_t$ is 
SBM with parameter $p\in [0,1]$ (e.g., see Lejay~\cite{Lejay} and references therein).
By analogy, the process $Z_t$ can be called   SBM with the two-valued drift $m$.
 Theorem~\ref{C1} is  a special case of  the  theorem below.
\begin{theorem}[\cite{GS}, Theorem 2]
\label{T1}
Let $\tau^{(Z)}, V^{(Z)}$ and $L^{(Z)}_T$  be the last visit to the origin,
 the occupation time and the 
local time at the origin  of the process $Z_t$.
 If $Z_0=0$, then 
 the joint  density of random variables $\tau^{(Z)}, V^{(Z)}, Z_T$ and $L^{(Z)}_T$ is 
 given by
\begin{equation}
\label{f}
f_{T}(t, v, x, \ell)
=2\alpha(x)h(v, \ell p)h(t-v, \ell q)h(T-t, x)\beta(t,v,x,\ell)
\end{equation}
for $0\leq v\leq  t\leq T$ and $\ell\geq 0$, where the function $\alpha$ is defined in~\eqref{a(x)} and 
$$\beta(t,v,x,\ell)=
e^{-\frac{1}{2}(m_1^2v+m^2_2(t-v)+m^2(x)(T-t))-\ell(pm_1-qm_2)+m(x)x}.$$
\end{theorem}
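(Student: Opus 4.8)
The plan is to prove \eqref{f} in two stages. First I would establish it in the \emph{driftless} case $m_1=m_2=0$, where $\beta\equiv 1$, using Itô's excursion theory together with a last-exit decomposition of the path. Then I would pass to the general two-valued drift by a Girsanov change of measure whose Radon--Nikodym density reproduces exactly the factor $\beta(t,v,x,\ell)$, so that the general formula follows by multiplication.

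For the driftless case I would decompose the trajectory at the last visit $\tau$ to the origin. On $[0,\tau]$ the process consists of completed excursions, which under the skew construction form two independent Poisson point processes: positive excursions carrying weight $p$ and negative excursions carrying weight $q$. Parametrising excursions by the local time $\ell=L_T=L_\tau$, the accumulated duration of positive (resp. negative) excursions up to local time $\ell$ is an increasing stable process whose value has Laplace transform $e^{-\ell p\sqrt{2\lambda}}$ (resp. $e^{-\ell q\sqrt{2\lambda}}$); inverting, the durations $V=v$ and $\tau-V=t-v$ have densities $h(v,\ell p)$ and $h(t-v,\ell q)$, and they are independent because the two excursion processes are. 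This yields the product $h(v,\ell p)h(t-v,\ell q)$. The remaining piece is the single incomplete excursion on $[\tau,T]$: its sign is chosen positive with probability $p$ and negative with probability $q$, giving the prefactor $\alpha(x)$ of \eqref{a(x)}, while by time reversal an excursion of duration $T-t$ terminating at level $x$ has the first-passage density $h(T-t,x)$. The overall factor $2$ reflects the normalisation of the symmetric local time adopted in \eqref{L}. Assembling the pieces gives \eqref{f} with $\beta\equiv 1$.

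To introduce the drift I would apply Girsanov's theorem between the law $\mathsf{P}_0$ of driftless skew Brownian motion and the law $\mathsf{P}$ solving \eqref{SBM_m}. Writing the driftless driving Brownian motion as $d\tilde W_s=dZ_s-(p-q)\,dL^{(Z)}_s$, the density is $\exp\!\big(\int_0^T m(Z_s)\,d\tilde W_s-\tfrac12\int_0^T m^2(Z_s)\,ds\big)$. I would evaluate the stochastic integral by the Itô--Tanaka formula applied to $M(z)=\int_0^z m(y)\,dy=m(z)z$: since $Z_0=0$ and $Z_T=x$ this contributes the boundary term $m(x)x$ together with a local-time term at the origin proportional to $(m_1-m_2)$; combining this with the contribution $-(p-q)\int_0^T m(Z_s)\,dL^{(Z)}_s$ from the skew term and keeping track of the skew-dependent local-time conventions, the local-time contributions collapse to $-\ell(pm_1-qm_2)$. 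Finally the occupation integral $\int_0^T m^2(Z_s)\,ds$ splits according to the sign of $Z_s$: the time spent positive is $v+(T-t)$ when $x\ge 0$, and the time spent negative is $(t-v)+(T-t)$ when $x<0$, so that in either case $\int_0^T m^2(Z_s)\,ds=m_1^2 v+m_2^2(t-v)+m^2(x)(T-t)$. Multiplying the driftless density by this Radon--Nikodym factor reproduces $\beta(t,v,x,\ell)$ exactly.

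The main obstacle is the careful local-time bookkeeping, which enters at two points. In the driftless step one must identify precisely which local time (the symmetric semimartingale local time versus the local time of reflected $|W|$) parametrises the excursions, since this choice fixes both the thinning factors $\ell p,\ell q$ and the constant $2$. In the Girsanov step the same ambiguity reappears: obtaining the coefficient $pm_1-qm_2$, rather than a naive $m_1$ or $(p-q)m_1$, requires reconciling the Itô--Tanaka local-time term with the $(p-q)\,dL^{(Z)}$ term of the defining SDE \eqref{SBM_m}, and this reconciliation is exactly where the weights $p,q$ must be inserted correctly. By comparison, establishing the independence of the pre-$\tau$ excursion structure from the terminal excursion, via the strong Markov property at the last-exit time and a standard last-exit (Williams-type) decomposition, is routine.
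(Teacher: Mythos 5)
You should first be aware that the paper does not prove Theorem~\ref{T1} at all: the statement is imported verbatim from~\cite{GS} (Theorem~2 there), and the present paper only specializes it to Theorem~\ref{C1}. So there is no in-paper proof to compare against; what can be said is that your two-stage route --- driftless case by excursion theory, then drift by change of measure --- is the natural one and matches the spirit of the cited derivation (the form of $\beta$, in particular the term $-\ell(pm_1-qm_2)$ which vanishes exactly for the drift \eqref{m(x)}--\eqref{qq}, is the signature of a Girsanov factor).

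On the merits, your proposal is correct, and the two computations it hinges on both close. In the driftless step, with the symmetric local time \eqref{L} the excursion measure of SBM per unit local time is $2p\,n^{+}+2q\,n^{-}$ (with $n^{\pm}$ the two halves of It\^o's measure), so the accumulated positive and negative excursion durations are independent subordinators with Laplace exponents $p\sqrt{2\lambda}$ and $q\sqrt{2\lambda}$, i.e.\ with densities $h(v,\ell p)$ and $h(t-v,\ell q)$, while the excursion straddling $T$ contributes the entrance-law factor $2\alpha(x)h(T-t,x)$; assembling these gives \eqref{f} with $\beta\equiv1$, and the formula passes the sanity checks (integrating out $v,\ell,x$ yields the arcsine density of $\tau$; at $p=1/2$, integrating out $v,t$ yields L\'evy's joint law of $(W_T,L_T)$). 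In the Girsanov step, It\^o--Tanaka with the symmetric convention gives $\int_0^T m(Z_s)\,d\tilde W_s=m(Z_T)Z_T-\bigl[(p-q)\tfrac{m_1+m_2}{2}+\tfrac{m_1-m_2}{2}\bigr]L_T$, and the bracket equals $pm_1-qm_2$ because $p+q=1$; your occupation-time split ($v+(T-t)$ positive when $x\ge0$, $(t-v)+(T-t)$ negative when $x<0$) is right, and Novikov's condition is trivial since $m$ is bounded. Two points deserve to be made explicit rather than implicit. First, multiplying the driftless density by the Radon--Nikodym derivative is legitimate only because that derivative is a.s.\ a function of $(\tau,V,Z_T,L_T)$ alone --- in general one would need the conditional expectation of $dP/dP_0$ given these functionals; your evaluation of the exponent establishes exactly this measurability, which is why the multiplication closes. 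Second, the joint density in the variable $\ell$ comes from the intensity measure of the excursion point process via the Palm/master formula, not from conditioning on $L_T=\ell$; this is also where the overall factor $2$ enters. Neither point is a gap --- they are precisely the bookkeeping you flag, and they resolve the way you claim.
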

\begin{example}
{\rm 
In the special case $m_1=m_2=m$ and $p=1/2$  equation~\eqref{f-1} becomes
\begin{equation*}
f_{T}(t, v, x, \ell)
=h(v, \ell/2)h(t-v, \ell/2)h(T-t, x)
e^{-\frac{m^2T}{2}+mx},
\end{equation*}
for $ 0\leq v\leq  t\leq T$ and $\ell\geq 0$, which is 
the joint density of quantities  $\tau^{(Z)}, V^{(Z)}, Z_T$ and $L^{(Z)}_T$ corresponding to the process 
$Z_t=mt+W_t$ in the case when  $Z_0=0$.
}
\end{example}

\begin{remark}
{\rm 
SDE~\eqref{X_SDE} is a special case of~\eqref{SBM_m}  with  the drift 
specified by  values 
$m_1=-\frac{\sgp}{2}$ and $m_2=-\frac{\sgm}{2}$
and probabilities  $p$ and  $q$ are  given by~\eqref{pq} and~\eqref{qq} respectively.
In this case we have that 
$pm_1-qm_2=
-\frac{\sgm\sgp}{2(\sgm+\sgp)}+
\frac{\sgp\sgm}{2(\sgm+\sgp)}=0,$ which 
reduces density~\eqref{f} to~\eqref{f-1}.
}
\end{remark}

\end{document}